\documentclass[prd,twocolumn,aps,letterpaper,amsmath,amssymb,preprintnumbers,showpacs,superscriptaddress,floatfix,longbibliography,nofootinbib]{revtex4-2}
\usepackage{array}
\usepackage{graphicx}
\usepackage[export]{adjustbox}
\usepackage[caption=false]{subfig}
\usepackage{tikz}
\usepackage{mathrsfs}
\usetikzlibrary{tikzmark}
\usetikzlibrary{calc}
\maxdeadcycles=200
\allowdisplaybreaks 
\usepackage[hypertexnames=true]{hyperref}
\usepackage[capitalize]{cleveref}
\hypersetup{
    colorlinks=true,       
    linkcolor=blue,          
    citecolor=blue,        
    filecolor=blue,      
    urlcolor=blue           
}

\usepackage{slashed}
\usepackage{mathtools}
\usepackage{listings}
\usepackage{pgf}
\usepackage{fancyvrb}
\usepackage{longtable}
\usepackage{bm}
\usepackage{tikz-cd} 
\usepackage{amsthm}


\newtheorem{lemma}{Lemma}

\DeclareMathOperator{\imag}{Im}
\DeclareMathOperator{\real}{Re}

\DeclareMathOperator{\Tr}{Tr}
\DeclareMathOperator{\sgn}{sgn}
\newcommand{\Z}[0]{\ensuremath{\mathbb{Z}}}
\newcommand{\R}[0]{\ensuremath{\mathbb{R}}}
\newcommand{\C}[0]{\ensuremath{\mathbb{C}}}

\newcommand{\disk}[0]{\ensuremath{\mathbb{D}}}


\newcommand{\bra}[1]{\ensuremath{\langle #1 |}}   
\newcommand{\ket}[1]{\ensuremath{| #1 \rangle}}   

\newcommand{\avg}[1]{\ensuremath{\langle #1 \rangle}}
\newcommand{\abs}[1]{\ensuremath{| #1|}}

\definecolor{darkgreen}{rgb}{0.0, 0.545098, 0.0}

\begin{document}
\title{Hadronic Structure, Conformal Maps, and Analytic Continuation}

\author{Thomas~Bergamaschi}
\affiliation{Center for Theoretical Physics, Massachusetts Institute of Technology, Cambridge, MA 02139, USA}
\author{William~I.~Jay}
\email{willjay@mit.edu}
\affiliation{Center for Theoretical Physics, Massachusetts Institute of Technology, Cambridge, MA 02139, USA}
\author{Patrick~R.~Oare}
\email{poare@mit.edu}
\affiliation{Center for Theoretical Physics, Massachusetts Institute of Technology, Cambridge, MA 02139, USA}
\preprint{MIT-CTP/5563}

\date{\today}

\begin{abstract}
    We present a method for analytic continuation of retarded Green functions, including Euclidean Green functions computed using lattice QCD.
    The method is based on conformal maps and construction of an interpolation function which is analytic in the upper half plane.
    A novel aspect of our treatment is rigorous bounding of systematic uncertainties, which are handled by constructing the full space of interpolating functions (at each point in the upper half-plane) consistent with the given Euclidean data and the constraints of analyticity.
    The resulting Green function in the upper half-plane has an appealing interpretation as a smeared spectral function.
\end{abstract}

\maketitle

\section{Introduction\label{sec:introduction}}
Current-current correlation functions in quantum chromodynamics (QCD) encode fundamental features of hadronic structure.
For example, the hadronic vacuum polarization tensor
is defined as the vacuum expectation value of the commutator of the electromagnetic currents~\cite{Aoyama:2020ynm,Bernecker:2011gh}:
\begin{align}
\rho_{\mu\nu}(q)
&= \frac{1}{2\pi}\int d^4x\, e^{iq\cdot x} \bra{\emptyset} [j_\mu^{\rm EM}(x), j_\nu^{\rm EM}(0)]\ket{\emptyset},
\end{align}
with 
$\rho_{\mu\nu}(q)
= (q_\mu q_\nu - q^2 g_{\mu\nu})\rho(q^2)$, where $\rho(q^2)$ is the spectral density.
The spectral density is related to the experimentally measured $R$-ratio,
\begin{align}
    \rho(s) = \frac{R(s)}{12\pi^2} &&
    R(s) = \frac{\sigma(e^+e^- \to {\rm hadrons})}{4\pi \alpha^2/3s},
    \label{eq:Rratio_def}
\end{align}
where the denominator is the tree-level QED cross section for $e^+e^-\to\mu^+\mu^-$ in the massless limit ($m_\mu^2 \ll s$).
The experimental data for $R(s)$, reproduced in  \cref{fig:R} from the compilation of Ref.~\cite{Davier:2019can},
show a rich resonance structure, with prominent peaks near the masses of vector resonances.
These resonant peaks are the hadronic structure of the vacuum polarization.

\begin{figure}[t]
    \centering
    \includegraphics[width=0.48\textwidth]{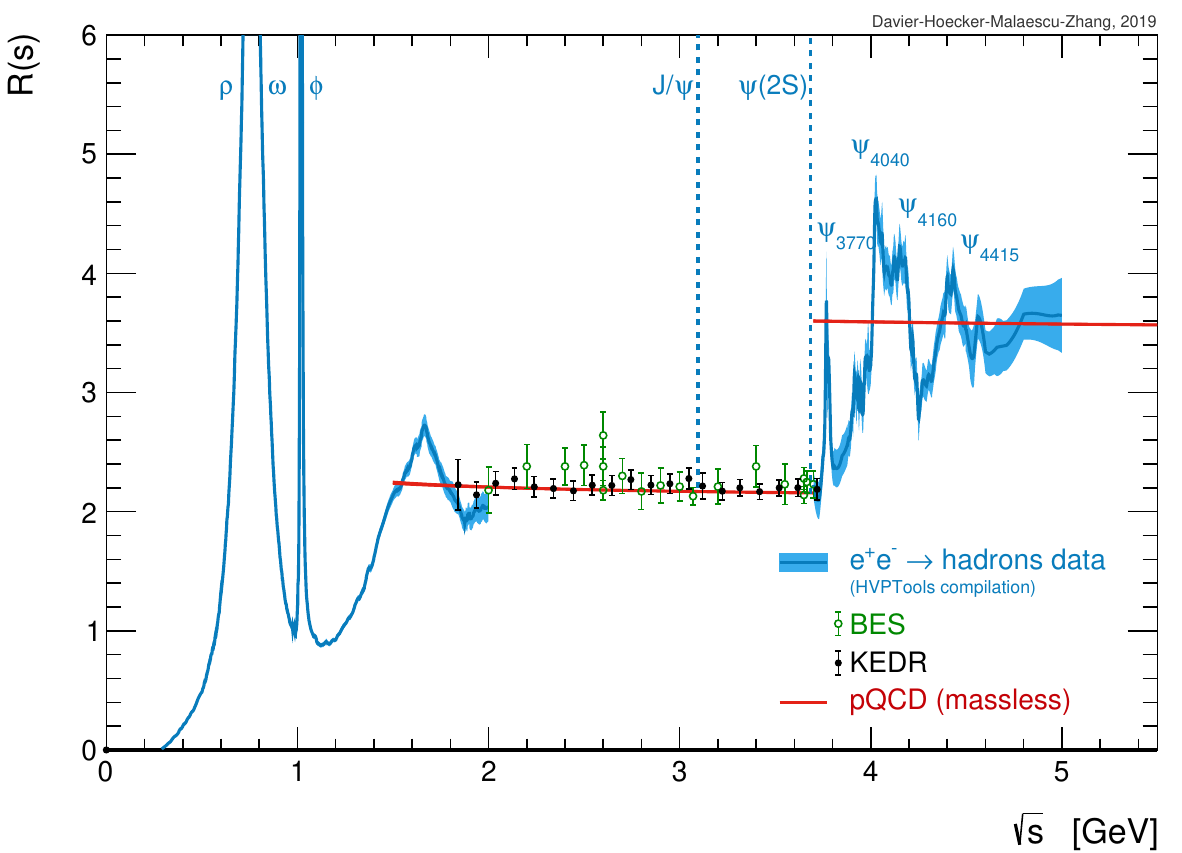}
    \caption{
        Compilation of experimental data for $R(s)$, reproduced with permission from Ref.~\cite{Davier:2019can}.
    }
    \label{fig:R}
\end{figure}

The hadronic tensor of unpolarized inclusive electron-proton scattering has a similar form~\cite{Manohar:1992tz}, 
\begin{align}
W_{\mu\nu}(p,q)
= \int \frac{d^4x}{4\pi} e^{iq\cdot x}
\bra{p} [j_\mu^{\rm EM}(x), j_\nu^{\rm EM}(0)]\ket{p},\label{eq:hadronic_tensor_EM}
\end{align}
where the external states correspond to a proton with four-momentum $p$ and $q=k-k'$ is the momentum transfer between the initial and final electrons with momenta $k$ and $k'$, respectively.
The Lorentz covariant decomposition of the hadronic tensor is given by
\begin{align}
    \begin{split}
    W_{\mu\nu}
    &= F_1\left(-g_{\mu\nu} + \frac{q_\mu q_\nu}{q^2} \right)\\
    &+ \frac{F_2}{p\cdot q}
    \left(
    p_\mu - \frac{p\cdot q q_\mu}{q^2}
    \right)
    \left(
    p_\nu - \frac{p\cdot q q_\nu}{q^2} 
    \right),
    \end{split}
\end{align}
where $F_1$ and $F_2$ are so-called structure functions.
Similar to the case of the hadronic vacuum polarization above, $F_1$ and $F_2$ can also be interpreted as spectral densities.
Moreover, as shown in \cref{fig:F2} for $F_2$, the experimentally measured structure functions exhibit conspicuous resonant peaks.
These structures encode the non-perturbative response of the proton to electromagnetic probes.
To date, essentially no first-principles understanding (e.g., from lattice QCD) of the structure functions exists in the resonance region.
For neutrino-nucleon scattering at similar energies, the analogue of \cref{eq:hadronic_tensor_EM} arises with flavor-changing vector and axial currents.
Compared to the electromagnetic case, the axial structure functions are especially poorly known.
Improved understanding of the axial structure functions would have important consequences for upcoming neutrino experiments like DUNE.

\begin{figure}[t]
    \centering
    \includegraphics[width=0.48\textwidth]{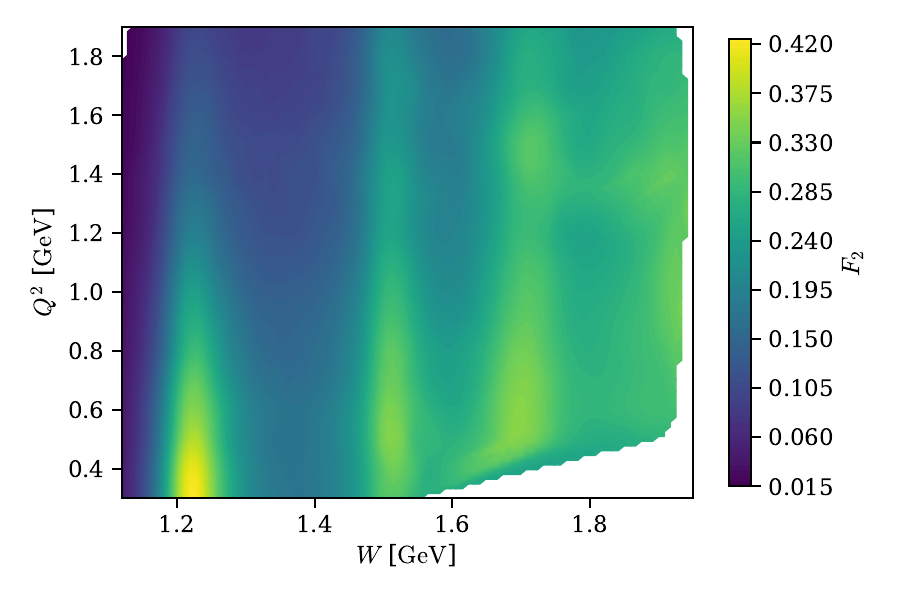}
    \caption{
        Interpolation of experimental data, taken from the CLAS Physics Database~\cite{Chesnokov:2022gjb}, for the structure function $F_2$ in the resonance region as a function of the momentum transfer $Q^2$ and the invariant mass $W$ of the hadronic system.
        Large resonant contributions are clearly visible states including $\Delta(1232)$ $N(1520)$, $N(1535)$, and $N(1720)$.
        In Ref.~\cite{Blin:2021twt}, at least a dozen open channels were used to model the experimental data.
    }
    \label{fig:F2}
\end{figure}

Euclidean-time analogues of 
\cref{eq:Rratio_def,eq:hadronic_tensor_EM} are calculable using lattice QCD.
Extracting a spectral density from a zero-temperature Euclidean correlation function $\mathscr{G}_E(\tau)$ requires inverting a Laplace transform,
\begin{align}
    \mathscr{G}_E(\tau) = \int_0^\infty d\omega\, e^{-\omega \tau}  \rho(\omega).\label{eq:laplace}
\end{align}
Lattice QCD calculations provide $\mathscr{G}_E(\tau)$ at a discrete set of Euclidean times.
Without further assumptions, this problem is famously ill posed.
Efforts to solve \cref{eq:laplace} have typically focused on the Laplace transform's structure as a linear integral equation, approximated by the discretized linear system
$\mathscr{G}_E(\tau_i) = K_{ij} \rho(\omega_j)$
where $K_{ij} = \int d\omega\, e^{-\tau_i\omega_j}$ and summation is implied.
Suitably regularized linear methods are then employed to extract the spectral density.
Much recent interest in the lattice community has been generated by the inversion algorithm of Ref.~\cite{Pijpers:1992}, independently rediscovered and applied to lattice gauge theory for the first time in Ref.~\cite{Hansen:2019idp}.
The general inverse problem has seen many approaches a broad literature spanning many fields~\cite{PhysRevB.96.035147,PhysRevE.95.061302,Itou:2020azb,doi:10.7566/JPSJ.89.012001,PhysRevB.101.035144,PhysRevB.98.035104,Shi:2022yqw,Chen:2021giw,Kades:2019wtd,Zhou:2023pti,Lechien:2022ieg,Wang:2021jou,
10.1093/gji/ggz520,DelDebbio:2021whr,Candido:2023nnb,Horak:2021syv,Pawlowski:2022zhh,Horak:2023xfb}; existing approaches used in the lattice community have recently been reviewed in Refs.
~\cite{Rothkopf:2022fyo,Rothkopf:2022ctl,Bulava:2023mjc}.

\Cref{eq:laplace} amounts to analytic continuation, a connection which is particularly clear in frequency space (see \cref{sec:finite_volume_green_functions} below),
\begin{align}
    G(i \omega_\ell) = \int_0^\beta d\tau e^{i\omega_\ell \tau} \mathscr{G}_E(\tau) ,
\end{align}
where the retarded Green function on the left-hand side is evaluated on the positive imaginary axis.
Meanwhile, the spectral density is related to its imaginary part evaluated on the positive real axis: $\rho(\omega)=\frac{1}{\pi} \imag G(\omega)$.
Therefore, the task of analytic continuation is to compute the behavior of $G(z)$ on the real line given finite data on the positive imaginary axis.

In the present work, we propose a method for solving this problem which differs from familiar approaches in three important regards.
First, the method is inherently nonlinear, based on special properties of certain conformal maps.
Second, the method works by constraining the behavior of $G(\omega + i \epsilon)$ directly in the upper half-plane, with the spectral density arising in the limit $\epsilon \to 0^+$.
Third, the method explicitly constructs, at each point in the upper half-plane, the full space of analytic functions consistent with the given Euclidean data.
Intuitively, one anticipates that this space ought to be small for points near the Euclidean data on the positive imaginary axis and large for points on the real line, where the problem becomes ill posed.
At each point $z\in\C^+$, this bounding space gives a rigorous bound on the systematic uncertainty associated with the analytic continuation.

The fact that lattice QCD calculations occur in a finite spatial volume raises important conceptual questions.
Ref.~\cite{Hansen:2017mnd} has convincingly argued that infinite-volume spectral densities may, as a matter of principle, be recovered from finite-volume calculations via the ordered limiting procedure
\begin{align}
\rho(\omega) 
    &=\lim_{\epsilon \to 0}
    \lim_{L \to \infty}
    \int d\omega'\, \delta_\epsilon(\omega, \omega')
    \rho_L(\omega')\\
    &\equiv\lim_{\epsilon \to 0}
    \lim_{L \to \infty}
    \rho^\epsilon_L(\omega) \label{eq:ordered_limit},
\end{align}
where $\rho_L(\omega)$ is a finite-volume spectral density, $\rho^\epsilon_L(\omega)$ is a smeared spectral density, and $\delta_\epsilon(\omega, \omega')$ is a smearing kernel.\footnote{In practical lattice QCD calculations, it may be advantageous to take the infinite-volume limit directly at the level of Euclidean-time correlation functions, with all other simulation parameters held fixed.
}
\footnote{For brevity, we will usually suppress the volume dependence in what follows.}
An important practical upshot is that computing smeared spectral densities is typically a well-posed, albeit numerically delicate, problem.
Below we show that analytic continuation at a finite distance $\epsilon$ above the real line has a natural interpretation in terms of a smeared spectral density.
Essentially the same point seems to have been observed in Ref.~\cite{Bulava:2019kbi}, where the usual $i\epsilon$ prescription for computing scattering amplitudes was replaced by a suitable smearing in a proposal to compute scattering amplitudes using lattice gauge theory.

Recently a similar set of ideas to the method we propose was applied to fermionic thermal Green functions in the context of condensed matter problems~\cite{PhysRevLett.126.056402,PhysRevB.104.165111}.
Bosonic Green functions are of particular interest in lattice gauge theory.
One approach to bosonic Green functions has been given in Ref.~\cite{Nogaki:2023mut}.
The present work gives an alternative treatment of bosonic Green functions.
Our rigorous treatment of the uncertainty in the analytic continuation is another novel feature of the present work.

The rest of this article is organized as follows.
\Cref{sec:thermal_green_functions} reviews some well-known analytic properties of thermal Green functions in the upper half-plane.
\Cref{sec:finite_volume_green_functions} specializes the results of \cref{sec:thermal_green_functions} to the case of a finite spatial volume, where the spectrum is discrete.
\Cref{sec:conformal_maps} transforms, by suitable conformal maps, the problem of analytic continuation from the upper half-plane to the open unit disk.
\Cref{sec:interpolation} constructs a rational-function approximation for $G(z)$ using the theory of Nevanlinna--Pick interpolation; \Cref{ssec:complete_algorithm} collects the technical pieces and summarizes the complete algorithm for the method.
\Cref{sec:numerical_GR} provides a recipe for computing $G(i\omega_\ell)$ given values for $\mathscr{G}_E(\tau)$, e.g., from a lattice QCD calculation.
\Cref{sec:examples} provides a series of numerical examples illustrating how the new method works in practice.
\Cref{sec:conclusions} provides some discussion of the results and presents our conclusions.

\section{Thermal Green functions in the complex plane\label{sec:thermal_green_functions}}

Consider a finite-temperature quantum field theory defined by an equilibrium density matrix
\begin{align}
\hat{\rho} = \frac{1}{\mathcal{Z}} e^{-\beta H},
\end{align}
with partition function $\mathcal{Z}$, Hamiltonian $H$, and inverse temperature $\beta$.\footnote{For completeness, the next two sections review standard definitions and relations between various correlation functions.
Aside from incidental comments, no novelty is claimed.
Similar material can be found with an emphasis on thermal properties, e.g., in Ref.~\cite{Meyer:2011gj}.
}
Expectation values are defined with respect to the density matrix as
$\avg{\mathcal{O}_1 \dots \mathcal{O}_n}
    \equiv \Tr \left\{ \hat{\rho}\, \mathcal{O}_1 \dots \mathcal{O}_n \right\}$.
Although many applications of interest occur at zero temperature, the formalism we present holds for arbitrary temperatures.
Moreover, as a matter of principle, lattice QCD calculations employ large but finite $\beta$.

Let $A$ be an operator.
We define the following correlation functions
\begin{align}
\mathscr{G}(t) &= \avg{A(t) A^\dagger(0)}, \\
\mathscr{G}_\pm(t) &= i\avg{ \{A(t), A^\dagger(0)\}_\pm} \label{eq:Gpm},
\end{align}
where the commutator $(-)$ arises for bosonic operators, while the anti-commutator $(+)$ is for fermionic operators.
The Euclidean Green function is defined via analytic continuation to the lower-half plane as
\begin{align}
    \mathscr{G}_E(\tau) \equiv \mathscr{G}(-i \tau),
    \label{eq:GE_definition}
\end{align}
where $\tau \in \R$ is the Euclidean time.
The retarded, or causal, correlator is defined as 
\begin{align}
G_{\pm}(\omega) &\equiv \int_0^\infty dt\, e^{i\omega t} \mathscr{G}_\pm(t).\label{eq:GR}
\end{align}
Nominally, $\omega$ is a real number.
When $\omega$ is replaced by a complex number, \cref{eq:GR} defines an analytic function $G_{\pm}(z)$ in the upper half plane, $\imag z > 0$.
The retarded correlator plays a key role in linear response theory~\cite{Meyer:2011gj}; its importance to the current discussion arises from its close connection to the Euclidean correlator.
The Fourier coefficients of the Euclidean correlator are defined as
\begin{align}
G_E^{(\ell)} \equiv \int_0^\beta d\tau\, e^{i \omega_\ell \tau} \mathscr{G}_E(\tau),
\label{eq:fourier_coefficients}
\end{align}
where the $\omega_\ell$ are the Matsubara frequencies, $2\ell\pi/\beta$ for bosons and $(2\ell+1)\pi/\beta$ for fermions, with $\ell \in \Z$.
In \cref{sec:finite_volume_green_functions} below, we will rederive the familiar result that 
\begin{align}
G_E^{(l)} = G_{\pm}(i\omega_\ell),\, \ell \neq 0.
\end{align}
In other words, the analytic continuation of the retarded correlator is the frequency-space Euclidean correlator.

Finally, the spectral density is defined as
\begin{align}
\rho_\pm(\omega)
    &= \frac{1}{2\pi i} \int_{-\infty}^{\infty} dt\, e^{i \omega t} \mathscr{G}_\pm(t)\\
    &= \frac{1}{2\pi i} \left( G_{\pm}(\omega) - G_{\pm}(\omega) ^*\right) \\
    &= \frac{1}{\pi} \imag G_{\pm}(\omega), \,\omega \in \R. \label{eq:rho_equals_ImGR}
\end{align}
where the second line follows using time-translation invariance
and the reality condition.
The final line gives the familiar result: for diagonal correlators, the spectral density is the imaginary part of the retarded Green function.

\begin{table}[]
    \begin{tabular}{c l l}
    \hline\hline
    Symbol                  & Description   & Definition \\
    \hline
    $\mathscr{G}(t)$        & Real-time Green function & \cref{eq:Gpm} \\
    $\mathscr{G}_E(\tau)$   & Euclidean-time Green function & \cref{eq:GE_definition}\\
    $G(\omega)$             & Retarded Green function, $\omega\in\R$ & \cref{eq:GR}\\
    $G(z)$                  & Retarded Green function, $z\in\C^+$ & \cref{eq:GR}\\
    $\mathcal{G}(z)$        & Retarded Green function, $z\in\disk$ & \cref{eq:GRplus_disk,eq:GRminus_disk}\\
    \hline\hline
    \end{tabular}
    \caption{
    The different Green functions appearing in this work.
    Bosonic and fermionic Green functions are distinuished by the sign of the commutator or anti-commutator, e.g., $G_\pm(z)$.
    }
    \label{table:green_functions}
\end{table}

\section{Finite-volume Green functions\label{sec:finite_volume_green_functions}}

The definitions of the preceding section were generic, in the sense that they made no particular assumption about the dynamics or spectrum of the theory.
We now specialize to the case of a thermal field theory in a finite spatial volume $V=L^3$, for which the spectrum is discrete.

Inserting complete sets of states in \cref{eq:Gpm} gives
\begin{align}
\mathscr{G}_\pm(t) &= \frac{i}{\mathcal{Z}} \sum_{n,m} e^{-iE_{nm}t} |A_{mn}|^2 \left( e^{-\beta E_m} \pm e^{-\beta E_n} \right),
\label{eq:Gpm_spectral_decomp}
\end{align}
where $E_{mn} \equiv E_m - E_n$ and $A_{mn} \equiv \bra{m}A\ket{n}$. 
Similarly, the retarded correlator in the upper half-plane $z\in \C^+$ becomes
\begin{align}
G_{\pm}(z) = \frac{1}{\mathcal{Z}} \sum_{n,m} |A_{mn}|^2 \left( e^{-\beta E_m} \pm e^{-\beta E_n} \right) \frac{-1}{z - E_{nm}},\label{eq:GR_finite_volume}
\end{align}
where we have used
\begin{align}
\int_0^\infty dt\, e^{i z t} e^{-i E_{nm} t} = \frac{i}{z - E_{nm}}
\end{align}
for $E_{mn} \in \R$ and $z \in \C^+$.
This confirms the statement from above that $G_{\pm}(z)$ is analytic in the upper half plane.
To extract the spectral density, we evaluate the pole in the upper half-plane at $z=\omega + i \epsilon$ with $\omega, \epsilon \in \R$.
It follows that 
\begin{align}
    \frac{1}{\pi}\imag \frac{-1}{(\omega + i \epsilon) - E_{nm}}
    &= \frac{1}{\pi}\frac{\epsilon}{(\omega - E_{nm})^2 + \epsilon^2}\\
    &\equiv \delta_\epsilon(\omega - E_{nm}),
\end{align}
where $\delta_\epsilon(x)$ is the Poisson kernel, which approaches the Dirac delta function in the usual distributional sense:
\begin{align}
    \lim_{\epsilon\to 0^+} \delta_\epsilon(x) = \delta(x).
\end{align}
Combining this result with \cref{eq:rho_equals_ImGR,eq:GR_finite_volume} confirms that the spectral density is a discrete sum of delta functions:
\begin{align}
\rho_\pm(\omega) &= \frac{1}{\mathcal{Z}} \sum_{n,m} |A_{mn}|^2
\left( e^{-\beta E_m} \pm e^{-\beta E_n} \right) \delta(\omega - E_{nm})\label{eq:rho_fv}\\
&\stackrel{\beta\to\infty}{=}
\sum_{n} |A_{0n}|^2
\big(
\delta(\omega - E_{n})
\pm \delta(\omega + E_{n})
\big).\label{eq:rho_fv_zerotemp}
\end{align}
For all temperatures, the spectral density is evidently an even function for fermions and an odd function for bosons.
The second line follows in the zero-temperature limit.
At zero temperature, and for a given fixed set of overlap factors $|A_{0n}|$ and energies $E_n$, 
the only difference between $\rho_{+}(\omega)$ and $\rho_{-}(\omega)$ is the relative minus sign on the negative real line.

Using the Poisson kernel, we define a ``smeared" spectral density via
\begin{align}
\rho^\epsilon_{\pm}(\omega) = 
\int d\omega^\prime \delta_\epsilon(\omega - \omega^\prime) \rho_\pm(\omega^\prime).
\end{align}
Substituting the explicit form for the spectral density in \cref{eq:rho_fv} 
then gives a useful generalization of \cref{eq:rho_equals_ImGR}:
\begin{align}
\rho^\epsilon_{\pm}(\omega)
= \frac{1}{\pi} \imag G_{\pm}(\omega + i \epsilon),
\label{eq:rhoeps_equals_ImGR}
\end{align}
which is valid for arbitrary $z=\omega+i\epsilon$ in the upper half-plane.
It is worth emphasizing that \cref{eq:rhoeps_equals_ImGR} does \emph{not} require $\epsilon$ to be small.
This result is noteworthy because it says that the smeared finite-volume spectral function, defined with the Poisson kernel, is the analytic continuation of the retarded Green function $G_{\pm}(\omega)$.
In particular, this formula establishes the close connection between the present work and recent work in Refs.~\cite{Hansen:2019idp,Bulava:2019kbi,Hansen:2017mnd}.
This smearing is also in the spirit of the classic work of Ref.~\cite{Poggio:1975af}.

It remains to relate the Euclidean Green function to 
the retarded Green function and the spectral density.
From \cref{eq:GE_definition,eq:Gpm_spectral_decomp}, it immediately follows that the spectral decomposition of $\mathscr{G}_E(\tau)$ is given by
\begin{align}
\mathscr{G}_E(\tau) = \frac{1}{\mathcal{Z}} \sum_{n,m} e^{-\beta E_m} e^{-E_{nm}\tau} \abs{A_{mn}}^2.
\end{align}
From the definition of Fourier coefficients $G_E^{(l)}$ in \cref{eq:fourier_coefficients}, 
it follows that $G_E^{(l)} = G_{\pm}(i \omega_l)$ at the appropriate bosonic or fermionic Matsubara frequencies.
Finally, comparison with \cref{eq:rho_fv} delivers the relation between $\mathscr{G}_E(t)$ and the spectral density
\begin{align}
    \mathscr{G}_E(\tau)
    &= \int_0^\infty d\omega\,\rho_\pm(\omega) \left[
    \frac{e^{-\omega \tau} + e^{-\omega(\beta-\tau)}}
    {1\pm e^{-\omega \beta}}
    \right]
    \label{eq:laplace_finite_temp}
    \\
    &\stackrel{\beta \to \infty}{=}
    \int_0^\infty d\omega\, \rho_\pm(\omega) e^{-\omega \tau}.\label{eq:laplace_zero_temp}
\end{align}
As anticipated in \cref{sec:introduction}, the second line shows that the zero-temperature Euclidean Green function is the Laplace transform of the spectral density.
For large but finite $\beta$, the backward-propagating contribution $e^{-\omega(\beta-\tau)}$  should be retained, while the denominator can be safely neglected.
For generic $\beta$, the denominator is related to the Fermi--Dirac ($+$) or Bose--Einstein ($-$) distribution.
The notation in \cref{eq:laplace_zero_temp}  echoes \cref{eq:rho_fv_zerotemp}, 
where zero-temperature fermionic and bosonic spectral densities agree for $\omega > 0$.

\section{Transforming Green functions to the unit disk\label{sec:conformal_maps}}

As discussed above, the retarded Green function is analytic in the upper half plane.
In a finite volume, \cref{eq:GR_finite_volume} shows that it consists of a sum of pairs of poles at $\omega = \pm E_{nm}$.
The associated spectral densities in \cref{eq:rho_fv} have definite parity: $\rho_+(\omega)$ is an even function, while $\rho_-(\omega)$ is an odd function.

The even parity of the fermionic spectral density implies that the retarded correlator is a function $G_{+}: \C^+ \to \C^+$, a fact also manifest directly in \cref{eq:GR_finite_volume}.
In other words, the retarded correlator satisfies a positivity condition in the upper half-plane: $\imag G_{+}(z) > 0$ for all $z \in \C^+$.
Functions satisfying this condition are known as Nevanlinna functions and have been studied extensively in complex analysis.
This property of fermionic Green functions was a key insight in Refs.~\cite{PhysRevLett.126.056402,PhysRevB.104.165111}.
It will prove useful to define an associated function on the unit disk $\mathbb{D}$.
To this end, and as in Refs.~\cite{PhysRevLett.126.056402,PhysRevB.104.165111}, the Cayley transform is defined as the map $C: \C^+ \to \disk$
\begin{align}
    C(z) = \frac{z-i}{z+i}
    &&C^{-1}(\zeta) = -i \left(\frac{\zeta+1}{\zeta-1}\right).
    \label{eq:cayley}
\end{align}
Here and below, we reserve the variable $\zeta$ for complex numbers in the unit disk; the variable $z$ will be used for complex numbers in the upper half-plane.
The Cayley transform is summarized visually in \cref{fig:cayley}.
Composition with the Cayley transform maps the domain and co-domain to $\disk$, yielding the desired function $\mathcal{G}_{+}:\disk \to \disk$,
\begin{align}
    \mathcal{G}_{+}(\zeta)
    = (C \circ G_{+} \circ C^{-1})(\zeta).
    \label{eq:GRplus_disk}
\end{align}

\begin{figure}[t]
    \centering
    \includegraphics[width=0.48\textwidth]{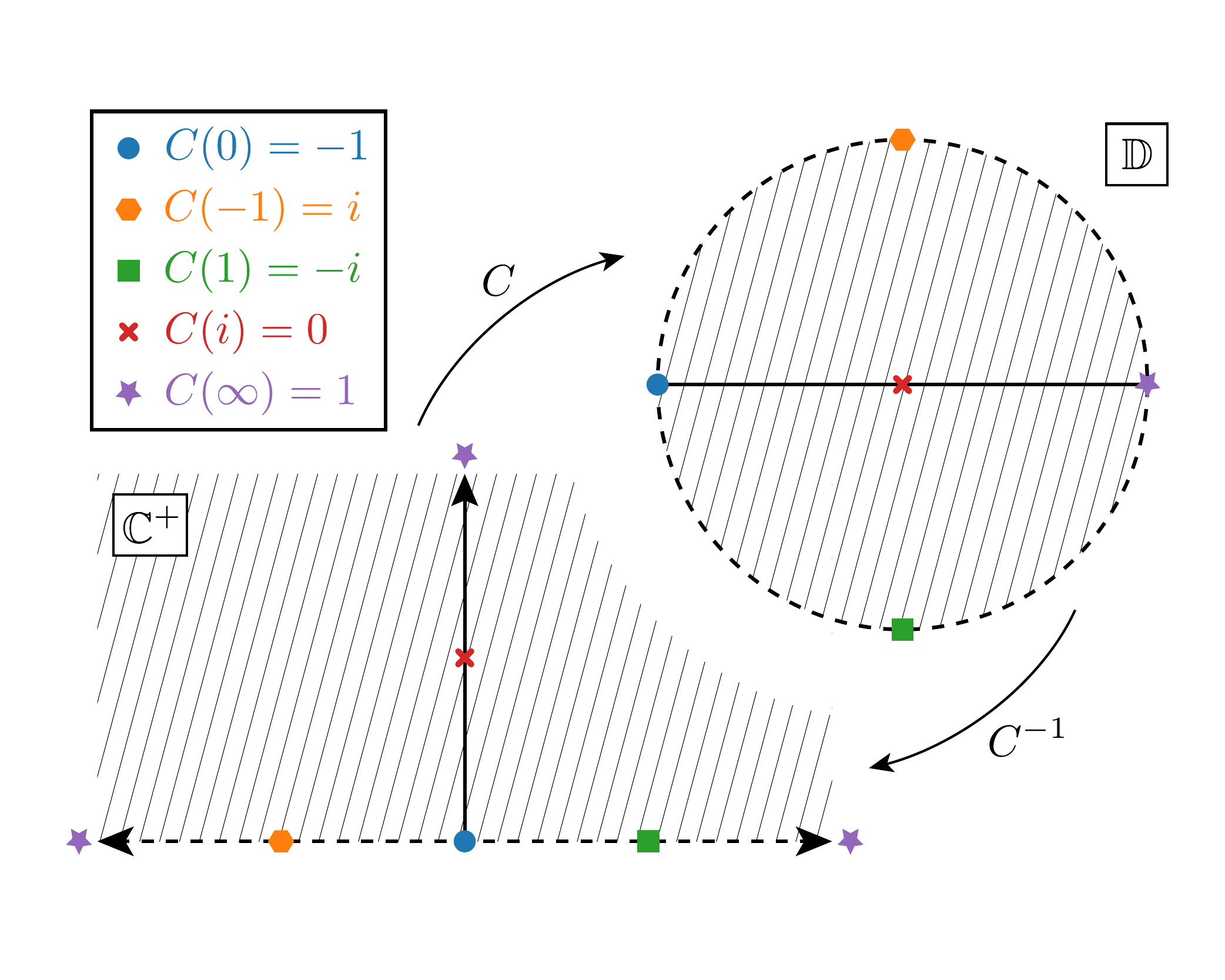}
    \caption{
       Diagramatic representation of the Cayley transform $C:\C^+\to\disk$ and the inverse transform $C^{-1}:\disk\to\C^+$.
       The real line is mapped to boundary of the unit disk, while the upper half-plane is mapped to its interior.
    }
    \label{fig:cayley}
\end{figure}

\begin{figure}[t]
    \centering
    \includegraphics[width=0.48\textwidth]{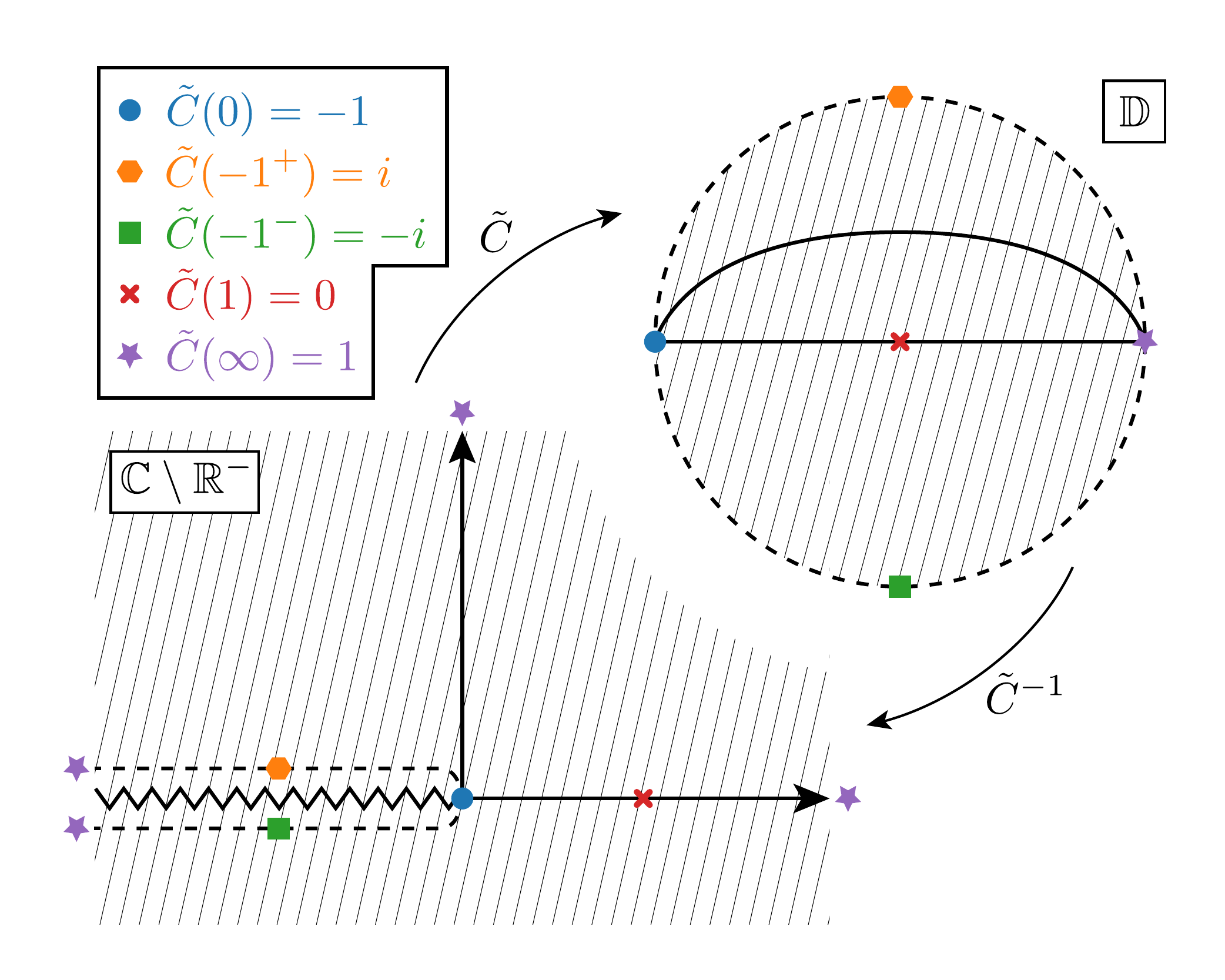}
    \caption{
       Diagrammatic representation of the transform $\widetilde{C}:\C\setminus R^- \to \disk$ and the inverse transform $\widetilde{C}^{-1}:\disk \to \C\setminus R^-$. The branch cut on the negative real line is mapped to the boundary of the unit disk. The positive real and imaginary axes are mapped to the indicated solid lines in the interior of the disk.
    }
    \label{fig:ctilde}
\end{figure}

Since the bosonic spectral density is an odd function, $G_{-}$ is not a Nevanlinna function.\footnote{The method described here for bosonic Green functions differs from the treatment in Ref.~\cite{Nogaki:2023mut}. See \cref{ssec:complete_algorithm} for a discussion of differences.}
Indeed, \cref{eq:GR_finite_volume} shows that $\imag G_{-}$ is positive in the first quadrant and negative in the second quadrant.
By symmetry, $\imag G_{-}$ only vanishes in the upper half plane along the positive imaginary axis, where the real part is strictly positive, $\real G_{-}(iy) > 0$ for $y>0$.
In other words, $G_{-}: \C^+ \to \C \setminus \R^-$, where $\R^-$ is the negative real line.
This motivates defining a modified transform $\widetilde{C}: \C\setminus\R^- \to \disk$ via
\begin{align}
\widetilde{C}(z)=\frac{\sqrt{z}-1}{\sqrt{z}+1},
&&\widetilde{C}^{-1}(\zeta) = \left(\frac{1+\zeta}{1-\zeta}\right)^2,
\label{eq:Ctilde}
\end{align}
with the usual branch cut along the negative real line.
The transformation $\tilde{C}$ is summarized visually in~\cref{fig:ctilde}.
Similar to the fermionic case, composition with $\widetilde{C}$ and the Cayley transorm then yields an associated function on the unit disk $\mathcal{G}_{-}:\disk\to\disk$:
\begin{align}
\mathcal{G}_{-}(\zeta) = (\widetilde{C}\circ G_{R-} \circ C^{-1})(\zeta)\label{eq:GRminus_disk}
\end{align}

The functions $\mathcal{G}_{\pm}(\zeta)$ are members of the Schur class $\mathscr{S}$,
\begin{align}
    \mathscr{S} = \{f: \disk \to \overline{\disk}: f~\text{is analytic}\}. \label{eq:schur_class}
\end{align}
The Schur class consists of non-constant analytic functions $f:\disk\to\disk$ as well as constant functions taking values in $\overline{\disk}$.
Viewed as elements of the Schur class, the functions $\mathcal{G}_\pm(\zeta)$ play a key role in the construction of interpolating functions below.

\section{Nevanlinna  interpolation\label{sec:interpolation}}

The basic idea to constrain an analytic function starting from an interpolation problem in the complex plane dates back more than a century, to work from Pick~\cite{Pick1915} and Nevalinna~\cite{Nevalinna1919,Nevalinna1929}.
Most of the fundamental results needed for the present work were recently reviewed with modern terminology and notation in recent lectures by Nicolau~\cite{Nicolau2016}, whose discussion we follow closely.
Additional technical details and context are accessibly presented in Refs.~\cite{BlaschkeBook,PickInterpolationBook}.
To keep the discussion self contained, we reproduce proofs for most of the necessary mathematical results.
Although these results are well known to experts, we think it is valuable to collect them here with a consistent notation.
The idea to use Nevanlinna interpolation for fermionic correlators was recently presented in Refs.~\cite{PhysRevLett.126.056402,PhysRevB.104.165111}.
To our knowledge, Refs.~\cite{PhysRevLett.126.056402,PhysRevB.104.165111} were the first to apply the recursive approach to Nevanlinna interpolation, which we also follow, to problems in field theory.

\subsection{An Interpolating Function on the Disk}

In the preceding sections, we showed how the retarded Green functions $G_{\pm}(z)$ can be transformed into functions $\mathcal{G}_{\pm}(\zeta): \disk \to \disk$.
We suppose that the Matsubara frequencies and Euclidean data have been mapped to the sets
\begin{align}
    \{ i\omega_\ell\} \mapsto \{ \zeta_\ell \} \subset \disk\\
    \{ G_{\pm}(i\omega_\ell) \} \mapsto \{ w_\ell \} \subset \disk
\end{align}
using \cref{eq:cayley,eq:Ctilde}.
\Cref{fig:eval_axis} summarizes the setup for the Matsubara frequencies, which are mapped to the interior of the unit disk.
The task now is to construct a rational function $f(\zeta): \disk\to\disk$ which interpolates these points, i.e., satisfying $f(\zeta_l) = w_l$.
By construction, $f(\zeta)$ will be analytic on the unit disk
and an element of the Hardy space $H^\infty$ (see, e.g., Ref.~\cite{nikolski_2019} for a discussion of Hardy spaces).
To begin, it is useful to establish some notation.
First, Blaschke factors are defined according to
\begin{align}
    b_a(\zeta) = \frac{\abs{a}}{a} \frac{a-\zeta}{1 - \bar{a}\zeta} 
    && a\in\mathbb D\setminus\{0\},
\end{align}
with $b_0(\zeta)\equiv \zeta = \mathrm{id}_{\mathbb D}$. Within particle physics, Blaschke factors play a familiar role in the conformal maps of the $z$-expansion used, for instance in the study of semileptonic decays of mesons~\cite{Caprini:1997mu,Boyd:1994tt,Boyd:1995cf,Boyd:1995sq,Boyd:1997kz,Grinstein:2017nlq}.
Blaschke factors enjoy several useful properties relevant for the present discussion.
First, for $\abs{a} < 1$, Blaschke factors are automorphisms of the unit disk.
Second, they clearly satisfy $b_a(a)$ = 0.
Combining these properties with the maximum modulus principle means that any analytic function $g:\disk\to\disk$ with a zero at $a\in\disk$ can be written in the factorized form $g(\zeta) = b_a(\zeta) \tilde{g}(\zeta)$ where $\tilde{g}(\zeta):\disk\to\disk$ is another analytic function~\cite{BlaschkeBook,PickInterpolationBook}.
Finally, the following matrix notation for M{\"o}bius maps will prove very convenient,
\begin{align}
\begin{pmatrix}
    a(\zeta) & b(\zeta) \\
    c(\zeta) & d(\zeta)
\end{pmatrix} h(\zeta)
\equiv 
\frac{a(\zeta) h(\zeta) + b(\zeta)}{c(\zeta) h(\zeta) + d(\zeta)},
\end{align}
where $h(\zeta)$ is a function.
One of the chief utilities of this notation is that function composition corresponds to matrix multiplication, with the matrix inverse coinciding to the function inverse.

\begin{figure}[t]
    \centering
    \includegraphics[width=0.48\textwidth]{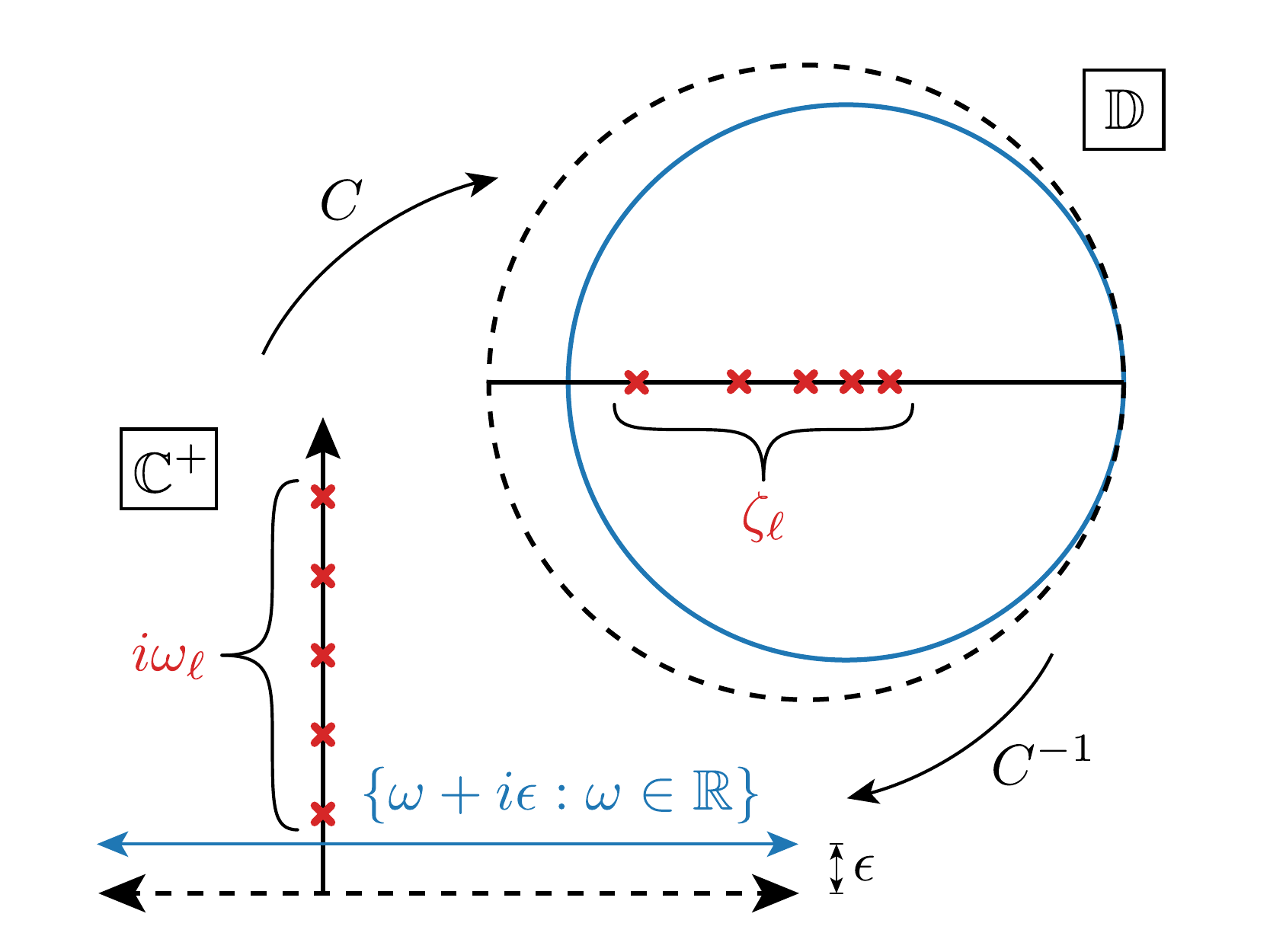}
    \caption{
       Mapping the Matsubara frequencies from the upper half-plane to the unit disk using the Cayley transform defined in \cref{eq:cayley}.
       The blue line a distance $\epsilon$ above the real line is mapped to the blue circle in $\disk$.
       We emphasize that the same mapping to the disk has been considered, for fermionic correlators, in Refs~\cite{PhysRevLett.126.056402,PhysRevB.104.165111} (cf. Fig 1 in Ref.~\cite{PhysRevLett.126.056402}).
    }
    \label{fig:eval_axis}
\end{figure}

Let us now apply these ideas to the interpolation problem, proceeding by induction.
For the base case, imposing $f(\zeta_1) - w_1 =0$ allows one to write
\begin{align}
    \frac{f(\zeta) - w_1}{1 - \overline{w_1} f(\zeta)} = b_{\zeta_1}(\zeta) f_1(\zeta) \label{eq:nevanlinna_base_case}
\end{align}
for some $f_1(\zeta) \in H^\infty$, since the left-hand side is an analytic function from the disk onto itself with zero at $w_1$.
Solving for $f(\zeta)$ gives
\begin{align}
    f(\zeta)
    &= \frac{b_{\zeta_1}(\zeta) f_1(\zeta) + w_1}{1 + \overline{w_1} b_{\zeta_1}(\zeta) f_1(\zeta)}\\
    &= \frac{1}{\sqrt{1-\abs{w_1}^2}}
    \begin{pmatrix}
        b_{\zeta_1}(\zeta) & w_1 \\
        \overline{w}_1 b_{\zeta_1}(\zeta) & 1
    \end{pmatrix} f_1(\zeta)\\
    &\equiv U_1(\zeta) f_1(\zeta),
\end{align}
where the normalization of the matrix $U_1(\zeta)$ has been chosen for later convenience so that $\det U_1(\zeta) = b_{\zeta_1}(\zeta)$.
Imposing the interpolation condition $f(\zeta_2) = w_2$ provides a constraint on the function $f_1(\zeta)$.
Namely, \cref{eq:nevanlinna_base_case} says that
\begin{align}
f_1(\zeta_2) = \frac{1}{b_{\zeta_1}(\zeta_2)} \frac{w_2 - w_1}{1 - \overline{w}_1 w_2} \equiv w_2^{(1)}
\end{align}
Thus, a Blaschke factor may be again be factored out to give
\begin{align}
\frac{f_1(\zeta) - w_2^{(1)}}{1-\overline{w}_2^{(1)} f_1(\zeta)} = b_{z_2}(\zeta) f_2(\zeta),
\end{align}
in terms of some $f_2(\zeta) \in H^\infty$.
Solving for $f_1(\zeta)$ then gives
\begin{align}
    f_1(\zeta) = \frac{b_{\zeta_2}(\zeta) f_2(\zeta) + w_2^{(1)}}{1 + \overline{w}_2^{(1)} b_{\zeta_2}(\zeta) f_2(\zeta)} \equiv U_2(\zeta) f_2(\zeta),
\end{align}
with $U_2(\zeta)$ defined analogously to $U_1(\zeta)$.
With this result, the inductive step now becomes obvious. 
Using all $N$ points,
\begin{align}
    f(\zeta)
    &= U_1(\zeta) U_2(\zeta) \cdots U_N(\zeta) f_N(\zeta)\\
    &\equiv \begin{pmatrix}
        P_N(\zeta) & Q_N(\zeta) \\
        R_N(\zeta) & S_N(\zeta) \\
    \end{pmatrix} f_N(\zeta) \label{eq:interpolating_family}
\end{align}
with $f_N(\zeta) \in H^\infty$.
The functions $P_N$, $Q_N$, $R_N$, $S_N$ are call the Nevalinna coefficients.
The matrices $U_n(\zeta)$ are defined according to
\begin{align}
    U_n(\zeta) = \frac{1}{\sqrt{1-\bigl| w_n^{(n-1)} \bigr|^2 }}
    \begin{pmatrix}
        b_{\zeta_n}(\zeta) & w_n^{(n-1)} \\
        \overline{w}_n^{(n-1)} b_{\zeta_n}(\zeta) & 1
    \end{pmatrix}
\end{align}
where $w_m^{(n)}$ is the value of $f_n$ evaluated at the $m$th zero: $w_m^{(n)} \equiv f_n(\zeta_m)$.
An explicit formula for $w_m^{(n)}$ follows from imposing the interpolation condition
\begin{align}
f(\zeta_m) &= U_1(\zeta_m) ... U_n(\zeta_m) w_m^{(n)} = w_m\\
w_m^{(n)} &= U_n^{-1}(\zeta_m) U_{n-1}^{-1}(\zeta_m) \cdots U_1^{-1}(\zeta_m) w_m
\end{align}
Since $f_N(\zeta) \in H^\infty$ is an arbitrary function, \cref{eq:interpolating_family} actually parameterizes the full space of possible analytic continuations of the given finite data.
Since freedom generically exists to include more points in the interpolation, the analytic continuation is not unique.
Remarkably, the size of the solution space can quantified sharply.

To see this, consider space of solutions to the $N$-point interpolation problem evaluated at $\zeta$:
\begin{align}
    \Delta_N(\zeta) = \Bigl\{ f(\zeta): f\in H^\infty, f(\{\zeta_n\})=\{w_n\}\Bigr\}.
\end{align}
The size of this set determines the ambiguity in the analytic continuation at the point $\zeta$.
For fixed $\zeta \in \overline\disk$, the set can be written explicitly.
The idea is to view \cref{eq:interpolating_family} as a function of the undetermined $f_N(\zeta) \in \disk$.
From this viewpoint, it immediately follows that
\begin{align}
\Delta_N(\zeta) = \{ T_{N,\zeta}(w): w \in \disk \},
\end{align}
where $T_{N,\zeta}:\disk \to \disk$ is the function
\begin{align}
    T_{N,\zeta}(w)
    = \frac{P_N(\zeta) w + Q_N(\zeta)}{R_N(\zeta) w + S_N(\zeta)}.
    \label{eq:Tnz}
\end{align}
This parameterization makes it clear that $\Delta_N(\zeta)$ is a disk, since M{\"obius} transformation map circles to circles.
As proven in \cref{app:disk_bounds}, this function is \emph{into} for all $\abs{\zeta} < 1$.
Moreover, the function is \emph{onto} when $\abs{\zeta}=1$.
We can also compute the center and radius of the disk $\Delta_N(\zeta)$.
Since $T_{N,\zeta}(-S_N(\zeta)/R_N(\zeta))=\infty$, the reflection property of M{\"o}bius transformations implies that is $\Delta_N(\zeta)$ centered at
\begin{align}
    c_N(\zeta)
    &=\frac
    {P_N(\zeta) \overline{(-R_N(\zeta)/S_N(\zeta))} + Q_N(\zeta)}
    {R_N(\zeta) \overline{(-R_N(\zeta)/S_N(\zeta))} + S_N(\zeta)}\\
    &=\frac
    {Q_N(\zeta) \overline{S}_N(\zeta) - P_N(\zeta) \overline{R}_N(\zeta)}
    {\abs{S_N(\zeta)}^2 - \abs{R_N(\zeta)}^2}.
    \label{eq:wertevorrat_center}
\end{align}
The radius of $\Delta_N(\zeta)$ can be computed by evaluating the distance between the center and its boundary.
A brief calculation shows that the radius is given by
\begin{align}
    r_N(\zeta) = \frac{\abs{B_N(\zeta)}}{\abs{S_N(\zeta)}^2 - \abs{R_N(\zeta)}^2}, \label{eq:wertevorrat_boundary}
\end{align}
where $B_N(\zeta)=P_N(\zeta) S_N(\zeta) - Q_N(\zeta) R_N(\zeta)$.
Collecting all the results, we see that $\Delta_N(\zeta) \subseteq \disk$ is a disk of radius $r_N(\zeta)$ centered at $c_N(\zeta)$.
Following the original work Nevanlinna~\cite{Nevalinna1919,Nevalinna1929}, this disk is sometimes known in the complex analysis literature as the \emph{Wertevorrat}.\footnote{
Consider a function $f:A\to B$ with co-domain $B$.
In German, the codomain is referred to as the \emph{Zielmenge} or \emph{Wertevorrat}.
The sense is the same: $\Delta_N(\zeta)$ is the set of values into which the analytic continuation is constrained to fall.
}
The Wertevorrat $\Delta_N(\zeta)$ rigorously contains the full infinite family of all possible analytic continuations at each point $\zeta\in\disk$.

\subsection{The Pick Criterion}

Given distinct $\zeta_1, \zeta_2, \dots, \zeta_n \in \disk$ and any $w_1, w_2, \dots, w_n \in \disk$, there exists a function in the Schur class $f \in \mathscr{S}$ which interpolates the points $f(\zeta_i) = w_i$ for all $i$ if and only if the Pick matrix 
\begin{align}
    \left[
    \frac
    {1-w_i \overline{w}_j}
    {1-\zeta_i\overline{\zeta}_j}
    \right]_{1\leq i, j\leq n}
\end{align}
is positive semidefinite
~\cite{Nicolau2016,Nevalinna1919,Nevalinna1929,Pick1915,BlaschkeBook,PickInterpolationBook}.
If this criterion fails to hold, $f\notin \mathscr{S}$.
For problems of physical interest, we expect the Pick criterion always to hold, at least if the Euclidean data are specified with sufficient precision.
In practice, the Pick criterion may not be satisfied by noisy Monte Carlo data.
This observation was also made in Refs.~\cite{PhysRevLett.126.056402,PhysRevB.104.165111}.
In this case, the formal bounding guarantees may not hold.
However, we expect the Wertevorrat will likely still provide quantitatively useful guidance.
A promising idea is to apply a numerical ``projection," in the same spirit at SVD cuts or shrinkage in standard least-squares fitting, to restore the Pick criterion.
Conceivably, this step could be done in such a way the ``projection" data would be statistically consistent with the original data.
Important work on robust spectral reconstructions in the presence of statistical noise has recently been given in Ref.~\cite{Huang:2022qsb}.

\subsection{Results Mapped to the Upper Half-plane}

For physical interpretation, it now remains to translate the results of the preceding section back to the upper half-plane.
The interpolating functions are readily mapped back by inverting \cref{eq:GRplus_disk,eq:GRminus_disk}:
\begin{align}
\begin{split}
    &G_{+}: \C^+ \to \C^+\\
    &G_{+}(z) = (C^{-1} \circ \mathcal{G}_{+} \circ C)(z)
\end{split}\\
\begin{split}
    &G_{-}: \C^+ \to \C\setminus\R^-\\
    &G_{-}(z) = (\widetilde{C}^{-1} \circ \mathcal{G}_{-} \circ C)(z).
\end{split}
\end{align}
Similarly, the Wertevorr{\"a}te in the upper half-plane are given by
\begin{align}
    \mathcal{D}_N(z) \equiv 
    \begin{cases}
    (C^{-1} \circ \Delta_N \circ C)(z) \text{ for } G_{+}\\
    (\widetilde{C}^{-1} \circ \Delta_N \circ C)(z) \text{ for } G_{-}
    \end{cases}
    z\in\C^+.
    \label{eq:wertevorrat_upper_half_plane}
\end{align}
Using the interpolating function, the smeared spectral density \cref{eq:rhoeps_equals_ImGR} can be computed via
\begin{align}
\rho^\epsilon_{\pm}(\omega)
= \frac{1}{\pi} \imag G_{\pm}(\omega + i \epsilon),
\end{align}
using the interpolating function for $G_{\pm}(z)$.
Since $\Delta_N(\zeta)$ is an open set, so too is $\mathcal{D}_N(z)$, by invariance of domain.
Let $\partial \mathcal{D}_N(z)$ denote its boundary.
The uncertainty in the smeared spectral density is given by the full width of the imaginary part of the boundary of the Wertevorrat,
\begin{align}
    \begin{split}
    \delta \rho^\epsilon_{\pm} (\omega) 
    =& \frac{1}{\pi} \left[\max \imag \partial \mathcal{D}_N(\omega + i \epsilon) \right.\\
    &\left.- \min \imag \partial \mathcal{D}_N(\omega + i \epsilon) \right].
    \end{split}
    \label{eq:error_bound}
\end{align}
The setup for $\rho^\epsilon_{\pm} (\omega)$ and $\delta \rho^\epsilon_{\pm} (\omega)$ is summarized in \cref{fig:wertevorrat}.

\begin{figure}[t]
    \centering
    \includegraphics[width=0.48\textwidth]{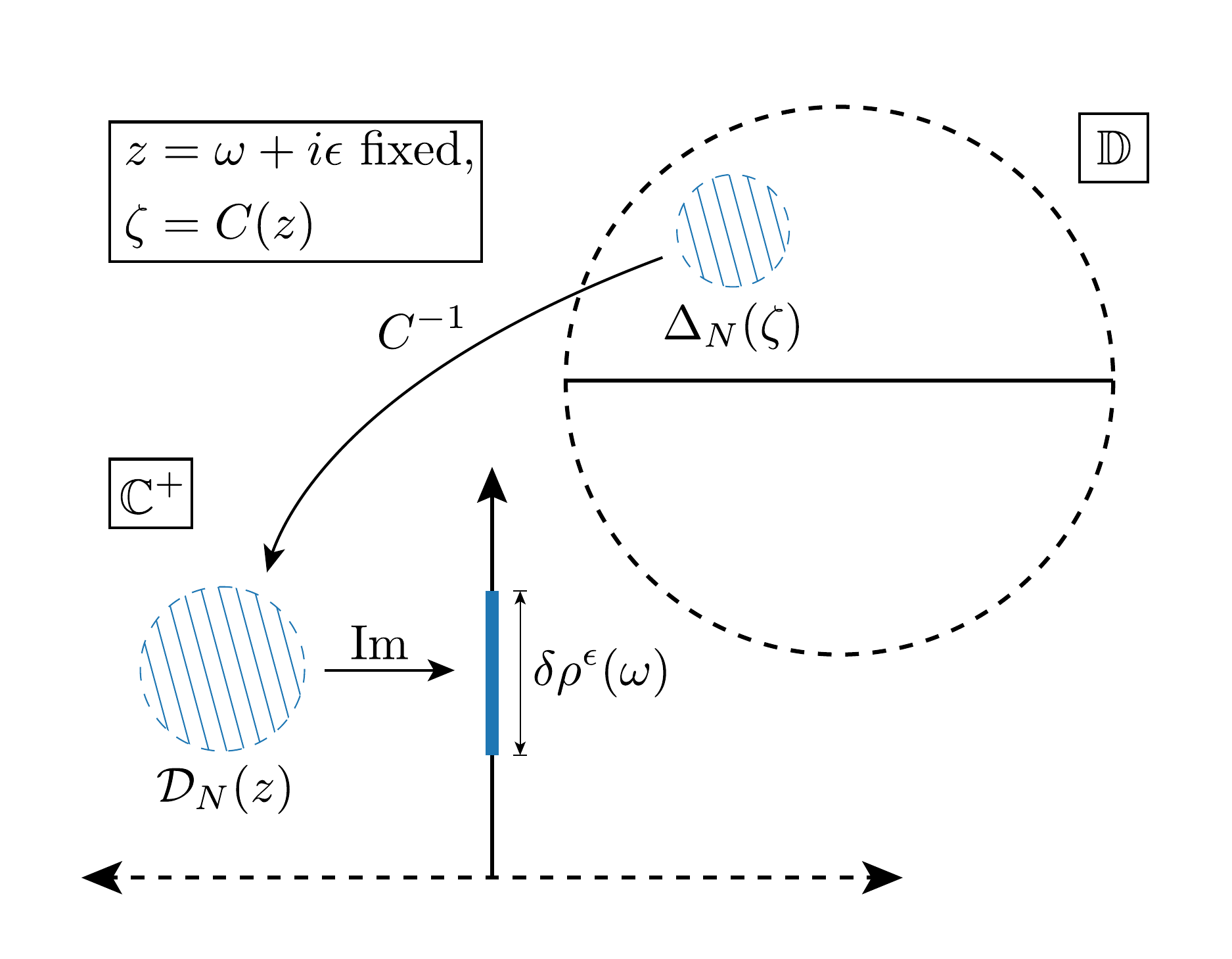}
    \caption{
    Mapping the Wertevorrat to $\C^+$ to evaluate the uncertainty in the smeared spectral density $\delta \rho^\epsilon(\omega)$.
    }
    \label{fig:wertevorrat}
\end{figure}

\Cref{eq:error_bound} has an appealing interpretation for $\epsilon=0$.
As argued above, the Wertevorrat actually fills the full unit disk for boundary values:
\begin{align}
\abs{\zeta}=1 \implies \Delta_N(\zeta) = \overline{\disk}.    
\end{align}
Since $C^{-1}(\overline\disk)=\overline{\C^+}$, the unsmeared spectral density is effectively unconstrained by the analytic continuation procedure:
\begin{align}
    \left. \delta \rho^\epsilon_{\pm}(\omega)\right|_{\epsilon=0} = \infty.
\end{align}
In other words, the ill-posed nature of unsmeared spectral reconstructions is manifest in the Wertevorrat formalism.
However, given Euclidean data $G(i\omega_l)$ at sufficiently many Matsubara frequencies, which amounts to large spatial volumes, $\rho^\epsilon_\pm(\omega)$ can be constrained tightly for $\epsilon > 0$.

\subsection{The Complete Algorithm\label{ssec:complete_algorithm}}

The method to compute smeared spectral densities $\rho^\epsilon(\omega)$ consists of the following steps:
\begin{enumerate}
\item Compute the Fourier coefficients $G_E^{(\ell)}$ via \cref{eq:fourier_coefficients}.\label{step:fourier}
These numbers constitute the Euclidean data.
\item Map Matsubara frequencies and Euclidean data to the unit disk using \cref{eq:cayley} and/or \cref{eq:Ctilde}.
\item Solve the interpolation problem by computing the Nevanlinna coefficients and then compute the boundary of the Wertevorrat using \cref{eq:wertevorrat_boundary}.
\item Map the Wertevorrat from the disk back to the upper half-plane using \cref{eq:wertevorrat_upper_half_plane}.
\item Evaluate the space of smeared spectral densities using \cref{eq:error_bound}.
The true smeared spectral function is rigorously bounded within this space.
\end{enumerate}

Aside from incidental details, the algorithm is the same for both fermionic and bosonic quantities. 
Indeed, apart from the choice of conformal maps to transform the problem to the unit disk, all of the arguments from complex analysis are identical.
Operationally, the two cases are distinguished by the choice of whether to compute Fourier coefficients using the bosonic or fermionic Matsubara frequencies in \cref{step:fourier} above.
This observation has a useful practical consequence.

Recall from \cref{eq:rho_fv_zerotemp} that, for a given set of energies and overlap factors, the zero-temperature spectral densities satisfy 
\begin{align}
\rho_{+}(\omega) = \sgn(\omega) \rho_{-}(\omega).
\end{align}
In particular, this implies that the smeared spectral densities converge to the same numerical result,
\begin{align}
\lim_{\epsilon\to 0}\rho^\epsilon_{+}(\omega) = \sgn(\omega) \lim_{\epsilon\to 0}\rho^\epsilon_{-}(\omega).\label{eq:boson_fermion_limit}
\end{align}
In practice, this means that both the bosonic and fermionic methods define valid smeared spectral densities for zero-temperature Green functions in the sense of \cref{eq:ordered_limit} as $\epsilon \to 0$.
The choice of which one to use when analyzing zero-temperature data therefore becomes a question of convenience and expediency.
In practice, it will likely prove useful to analyze \emph{both} and construct the joint limit as $\epsilon$ approaches zero.
A numerical example of this idea is given below.

Another way to see that $\rho^\epsilon_\pm(\omega)$ both define valid smeared spectral functions at zero temperature, in the sense of \cref{eq:ordered_limit}, is to note that experimental data can be smeared equally well to give $\rho^\epsilon_{+}(\omega)$ or $\rho^\epsilon_{-}(\omega)$, which can be compared to the result of spectral reconstruction.

With all the technical pieces now assembled, it is useful to highlight some similarities and important differences with recent work in the literature.
The first difference has to do with the mapping of the Green functions to the unit disk. 
In Refs~\cite{PhysRevLett.126.056402,PhysRevB.104.165111}, the insightful observation was made that fermionic Green functions have the Nevanlinna property ($\imag G_+(z) > 0$ for all $z\in C^+$), which allowed the problem to be mapped to $\disk$ with the Cayley transform in \cref{eq:cayley}.
The present treatment maps fermionic Green functions to the disk in precisely the same way.
A new observation in the present work is that essentially the same idea holds for bosonic Green functions, provided the conformal map of \cref{eq:Ctilde} is used.
In this sense, our perspective is that the Nevalinna property was a sufficient, but not necessary, condition to enable the use of the interpolation techniques for the unit disk in Ref.~\cite{PhysRevLett.126.056402,PhysRevB.104.165111} and \cref{sec:interpolation}.
Our treatment of bosonic Green functions differs from that of Ref.~\cite{Nogaki:2023mut}, where an auxiliary fermionic problem was constructed using the ``hyperbolic tangent trick"~\cite{Meyer:2007ic,Itou:2020azb} to enable the use of the Cayley transform.
The hyperbolic tanget trick appears to be related to the fluctuation dissipation theorem (cf. Eq.~(39) in Ref.~\cite{Meyer:2011gj}) and holds in the unsmeared limit $\epsilon \to 0$.
It would be interesting to explore its generalization to generic finite $\epsilon$ as well as the combination with error bounds from the Wertevorrat.
Since fermionic spectral densities are positive for all $\omega \in \R$, reconstructions using the method of Ref.~\cite{Nogaki:2023mut} may have attractive positivity features compared to the method in the present work.

A second important difference is related to how results are mapped back to the upper half-plane.
Stated briefly, the algorithm in Refs.~\cite{PhysRevLett.126.056402,PhysRevB.104.165111,Nogaki:2023mut,nogaki2023nevanlinnajl}
yields a single value for the smeared spectral function $\rho^\epsilon(\omega)$ for each $z=\omega +i \epsilon$ (essentially via \cref{eq:rhoeps_equals_ImGR}), while the algorithm in the present work yields a bounding envelope via \cref{eq:error_bound}.
In more detail, the difference stems from the following observations.
As in Refs.~\cite{PhysRevLett.126.056402,PhysRevB.104.165111,Nogaki:2023mut,nogaki2023nevanlinnajl}, 
the fundamental interpolation function on $\disk$ is given by an equation like \cref{eq:interpolating_family} which includes an undetermined function $f_N(\zeta)$ representing the freedom to include additional input data.
In Refs.~\cite{PhysRevLett.126.056402,PhysRevB.104.165111,Nogaki:2023mut,nogaki2023nevanlinnajl}, this freedom is exploited in a smoothing step, which removes spurious oscillations from the resulting spectral function.
The smoothing step constrains the spectral function to minimize a certain convex functional enforcing the normalization sum rule ($\int d\omega \rho(\omega) = 1$) and a notion of smoothness.
To the best of our knowledge, no rigorous field-theoretic justification exists for the precise form of the smoothing criterion.
The smoothed spectral functions certainly appear more physical and do seem to agree well with known results in numerical test cases given in Refs.~\cite{PhysRevLett.126.056402,PhysRevB.104.165111,Nogaki:2023mut,nogaki2023nevanlinnajl}.
However, the perspective of the present work is that the smoothing step introduces an uncontrolled systematic uncertainty.
For applications of interest in lattice QCD, this systematic uncertainty must be quantified.
The present work takes a conservative stance, using the Wertevorrat~\cite{Nevalinna1919,Nevalinna1929,Nicolau2016} to characterize the full space of possible solutions.
The Wertevorrat seems to have been long understood in the mathematical community.
To the best of our knowledge, the present work is the first to recognize its role in bounding uncertainty in analytic continuation in field theory problems.
It would be interesting to combine additional physical constraints (e.g. regarding the moments of the spectral functions as mentioned in Refs.~\cite{PhysRevLett.126.056402,PhysRevB.104.165111} and implemented in Ref.~\cite{nogaki2023nevanlinnajl}) with the novel interpretation of the Wertevorrat.

\section{Numerical calculation of 
\texorpdfstring{$G(i\omega)$}{}
\label{sec:numerical_GR}
}

A few words are in order about how to compute $G(i\omega_l)$ given numerical data for $\mathscr{G}_E(\tau)$.
As shown in \cref{eq:fourier_coefficients}, the desired frequency-space data are the Fourier coefficients of the Euclidean Green function.
Since lattice QCD calculations provide Euclidean data for $\mathscr{G}_E(\tau)$ at the discrete times $\tau\in\{0, 1, \dots \beta-1\}$, it is tempting to compute the Fourier coefficients with the discrete Fourier transform,
\begin{align}
\int_0^\beta d\tau\, e^{i \omega_\ell \tau} \mathscr{G}_E(\tau)
&\stackrel{?}{\approx} \sum_{t=0}^{\beta-1} e^{i \omega_\ell \tau} \mathscr{G}_E(\tau).
\end{align}
For the sake of illustration, let us carry out the bosonic Matsubara sum for a unit-amplitude Green function saturated by a single state, $\mathscr{G}_E(\tau)=e^{-m\tau} + e^{-m(\beta-\tau)}$.
In this case, one readily finds
\begin{align}
\sum_{\tau=0}^{\beta-1} e^{i \omega_\ell \tau} \mathscr{G}_E(\tau)
&= \frac{\left(1 - e^{-\beta m}\right) \sinh(ma)}{\cosh(ma) - \cos( \omega_la)},
\label{eq:dft}
\end{align}
where the implicit dependence on the lattice scale has been made explicit on the right-hand side.
Trying to interpret the right-hand side as $G_+(i\omega_l)$ is immediately problematic.
Due to the periodicity of the denominator, the right-hand side is not analytic in the upper-half plane.
Indeed, besides the expected pair of poles on the real axis, spurious poles also appear in the upper half-plane with offsets at $\pm ma + 2\pi i \Z$.
Nevertheless, the continuum limit has the correct analytic structure,
\begin{align}
\lim_{a\to0} a \frac{\left(1 - e^{-\beta m}\right) \sinh(ma)}{\cosh(ma) - \cos( \omega_la)} = \frac{2 m}{m^2 + \omega^2}.
\end{align}

\begin{figure}[t]
    \centering
    \includegraphics[width=0.48\textwidth]{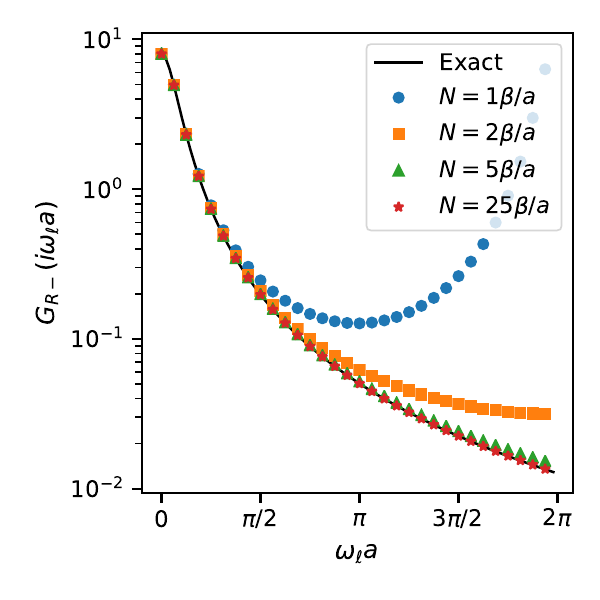}
    \caption{
    Numerical evaluation of the Fourier coefficients $G_E^{(\ell)}$ using different interpolation densities, using a Euclidean correlator $G_E(\tau)=e^{-m\tau} + e^{-m(\beta-\tau)}$ with $\beta=96$ and $m=0.25$.
    To avoid many overlapping points, results are shown for every third Matsubara frequency.
    The different markers show different interpolation refinements.
    The blue circles at the top are the result of the discrete Fourier transform of \cref{eq:dft} and exhibit the expected pathology for large $\omega_\ell$.
    For an interpolation refinement of $25\beta/a$ (red stars), all Fourier coefficients agree with the exact continuum results with sub-percent precision.
    }
    \label{fig:fourier_components}
\end{figure}

This observation immediately suggests a solution, namely, constructing a better approximation to the integral of \cref{eq:fourier_coefficients}.
Since $\mathscr{G}_E(t)$ is smooth and monotonic for $t\in\{0,\beta/2\}$, it can safely be interpolated, and this interpolation can be used to evaluate \cref{eq:fourier_coefficients} numerically on a finer grid.
In practice, it is advantageous to construct an interpolation of $\log \mathscr{G}_E(t)$, e.g., with a simple polynomial spline, since the logarithm is slowly varying.
For example, \cref{fig:fourier_components} shows the result of this procedure for a range of grid refinements with $N\in\beta/a\times\{1,2,5,25\}$ total points.
By interpolating, continuum-like values for $G_\pm(i\omega_l)$ can be obtained, at least for low frequencies.
For a given lattice spacing, estimating the high-frequency 
components ($\omega_l a \gg 1$)
remains difficult from a practical perspective, since essentially all information about these components will have already decayed away by the first Euclidean time.

\section{Numerical Examples\label{sec:examples}}

This section presents numerical examples of the algorithm described in \cref{ssec:complete_algorithm}.
In each case, the exact spectral density will specified.
For simplicity, the zero-temperature limit will be used.
It will be convenient to express the spectral density as a sum of arbitrarily close delta functions via \cref{eq:rho_fv_zerotemp}.
When the spectral density is evaluated numerically on a fine, uniform grid in energy with spacing $\Delta E$, the spectral weights are given by
\begin{align}
|A_{0n}|^2 = \int_{E_n}^{E_n+\Delta E} d\omega\,\rho(\omega),
\end{align}
where $E_n$ is the $n{\rm th}$ energy in the grid.
The retarded Green function evaluated at the Matsubara frequencies then follows directly from \cref{eq:GR_finite_volume}.
These values for $G_E^{(\ell)}=G_{\pm}(i \omega_\ell)$ provide the starting data for the numerical examples.

\subsection{Discrete Spectral Features: Isolated Poles}

\begin{figure}[t]
    \centering
    \includegraphics[width=0.48\textwidth]{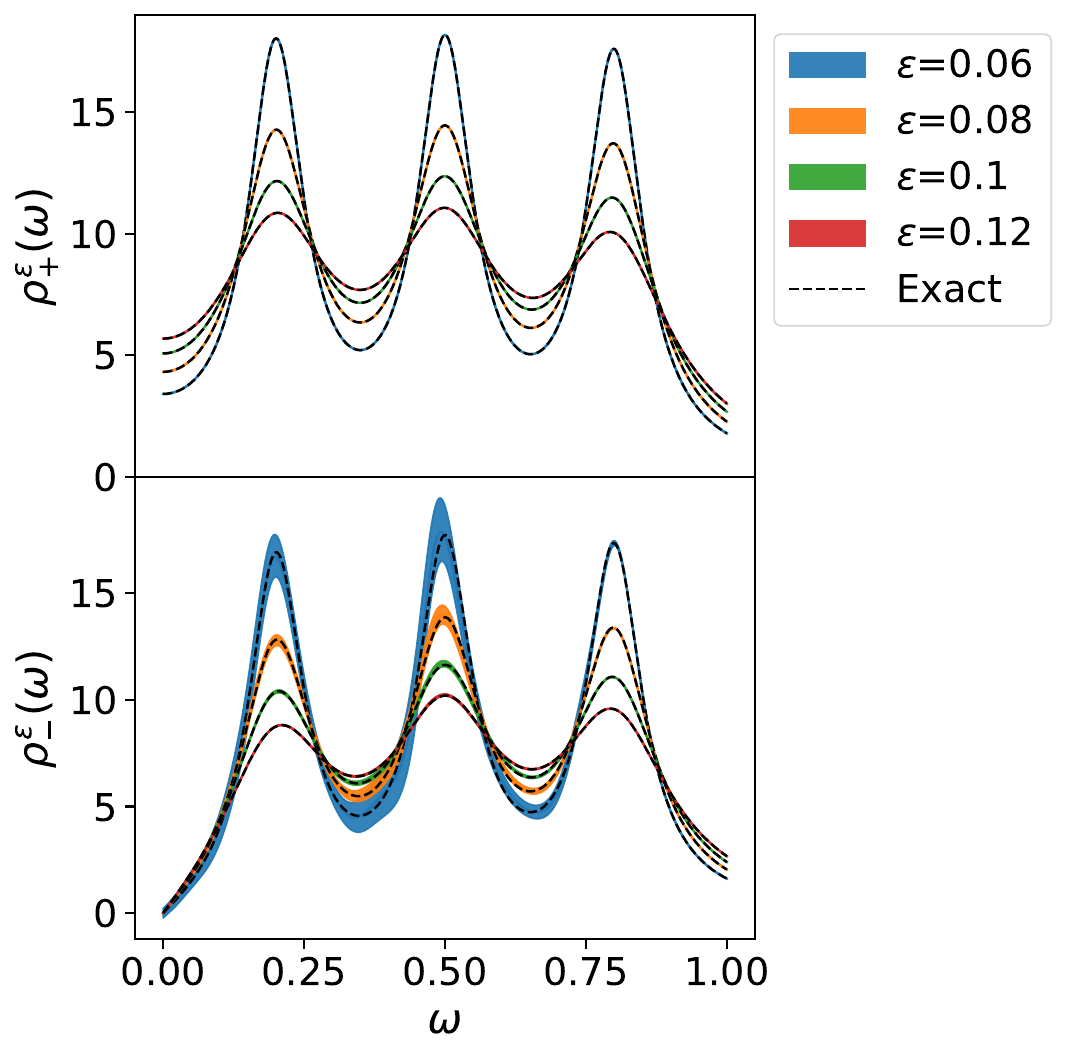}
    \caption{
    Smeared spectral reconstructions of spectral density with delta functions at $\omega \in\{0.2, 0.5, 0.8\}$ and with $\beta=64$.
    The upper (lower) panel shows fermionic (bosonic) reconstructions for several smearing widths.
    The dotted lines show the location of the exact result for each smearing width.
    The exact results lies within bounding envelope of the Wertevorrat.
    }
    \label{fig:three_peak_combined}
\end{figure}

Consider a Green function consisting of three isolated poles with masses $m \in \{ 0.2, 0.5, 0.8\}$ and unit residues.
The corresponding spectral function on the positive real axis is
\begin{align}
    \rho(\omega) =
    \delta(\omega - 0.2)
    +\delta(\omega - 0.5)
    +\delta(\omega -0.8).
\end{align}
Using this spectral function as input, Euclidean data were generated along the imaginary-frequency axis at the Matsubara frequencies with $\beta=64$.
The corresponding reconstructions are shown for a variety of smearing widths in~\cref{fig:three_peak_combined}.
The upper (lower) panel shows the fermionic (bosonic) reconstruction.
In all cases, the exact result lies within the rigorous bounding envelope of the Wertevorrat.
In the fermionic case, the width of the bounding envelope is too small to see visibly.

\subsection{Extended Spectral Features: Gaussians}

\begin{figure}[t]
    \centering
    \includegraphics[width=0.48\textwidth]{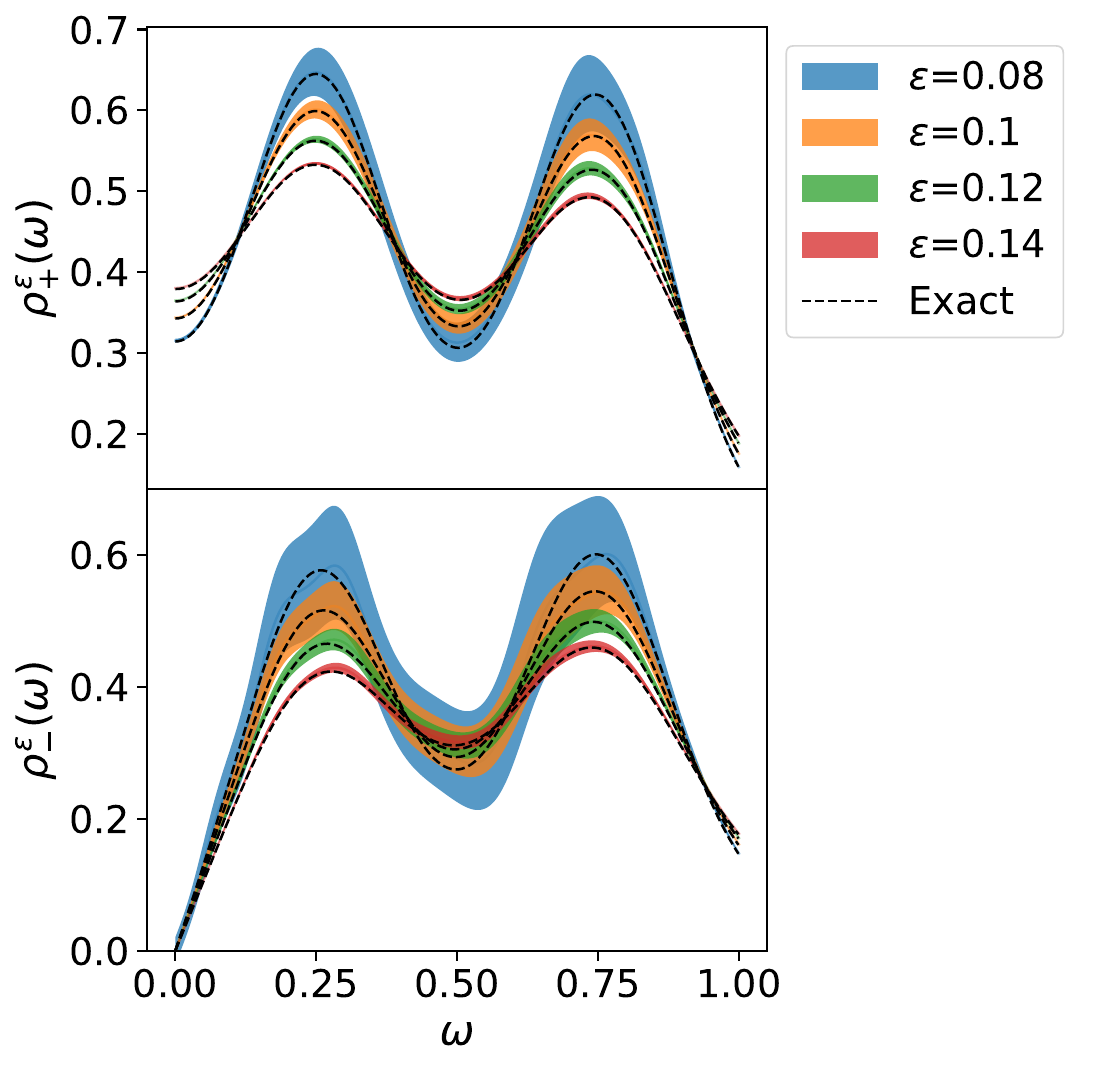}
    \caption{
    Smeared spectral reconstructions of a spectral densitity consisting of a pair of Gaussians, \cref{eq:example_gaussians}.
    Euclidean data were generated with $\beta=48$.
    The upper (lower) panel shows fermionic (bosonic) reconstruction for several smearing widths.
    The dotted lines shows the location of the exact result for each smearing width.
    In all cases, the exact result lies visibly within the bounding envelope of the Wertevorrat.
    }
    \label{fig:gaussian_combined}
\end{figure}

\begin{figure}[t]
    \centering
    \includegraphics[width=0.48\textwidth]{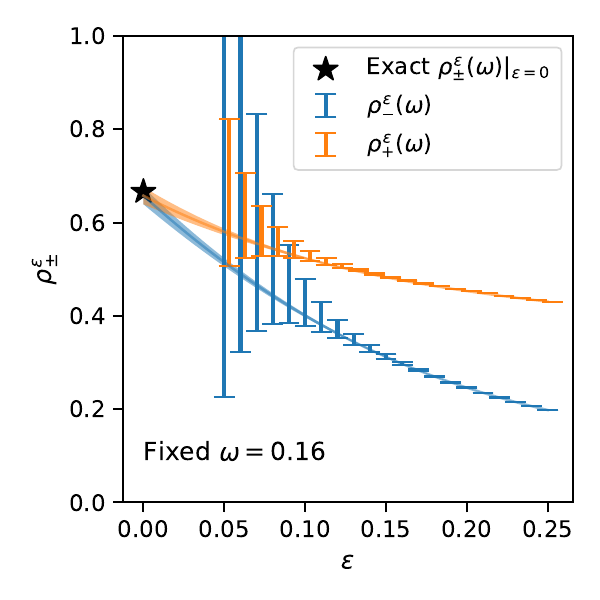}
    \caption{
    Example extrapolation to zero smearing width for the Gaussian example of \cref{eq:example_gaussians} for fixed energy.
    The data points shows reconstructions for $\rho^\epsilon_{\pm}(\omega)$.
    The solid curve shows the result of a polynomial fit imposing \cref{eq:boson_fermion_limit}.
    The black star shows the exact result from \cref{eq:example_gaussians}.
    }
    \label{fig:gaussian_extrapolation}
\end{figure}

Consider a spectral function on the positive real axis consisting of a sum of Gaussians
\begin{align}
    \rho(\omega) = \sum_{i}
    \frac{1}{\sqrt{2\pi}\sigma_i}\exp \left(-\frac{(\omega-\mu_i)^2}{2\sigma_i^2}\right),
    \label{eq:example_gaussians}
\end{align}
where $\mu=\{0.25, 0.75\}$ and $\sigma=\{0.1, 0.1\}$.
Using this spectral function as input, Euclidean data were generated along the imaginary-frequency axis at the Matsubara frequencies with $\beta=48$.

The corresponding reconstructions are shown for a variety of smearing widths in~\cref{fig:gaussian_combined}.
The upper (lower) panel shows the fermionic (bosonic) reconstruction.
In all cases, the exact result lies within the rigorous bounding envelope of the Wertevorrat.
\Cref{fig:gaussian_extrapolation} shows an example of how the bosonic and fermionic reconstruction may be used together to reconstruct their shared liming value of $\rho_{\pm}(\omega)$, cf. \cref{eq:boson_fermion_limit}.

\subsection{The \texorpdfstring{$R-$}~Ratio}

\begin{figure}[t]
    \centering
    \includegraphics[width=0.48\textwidth]{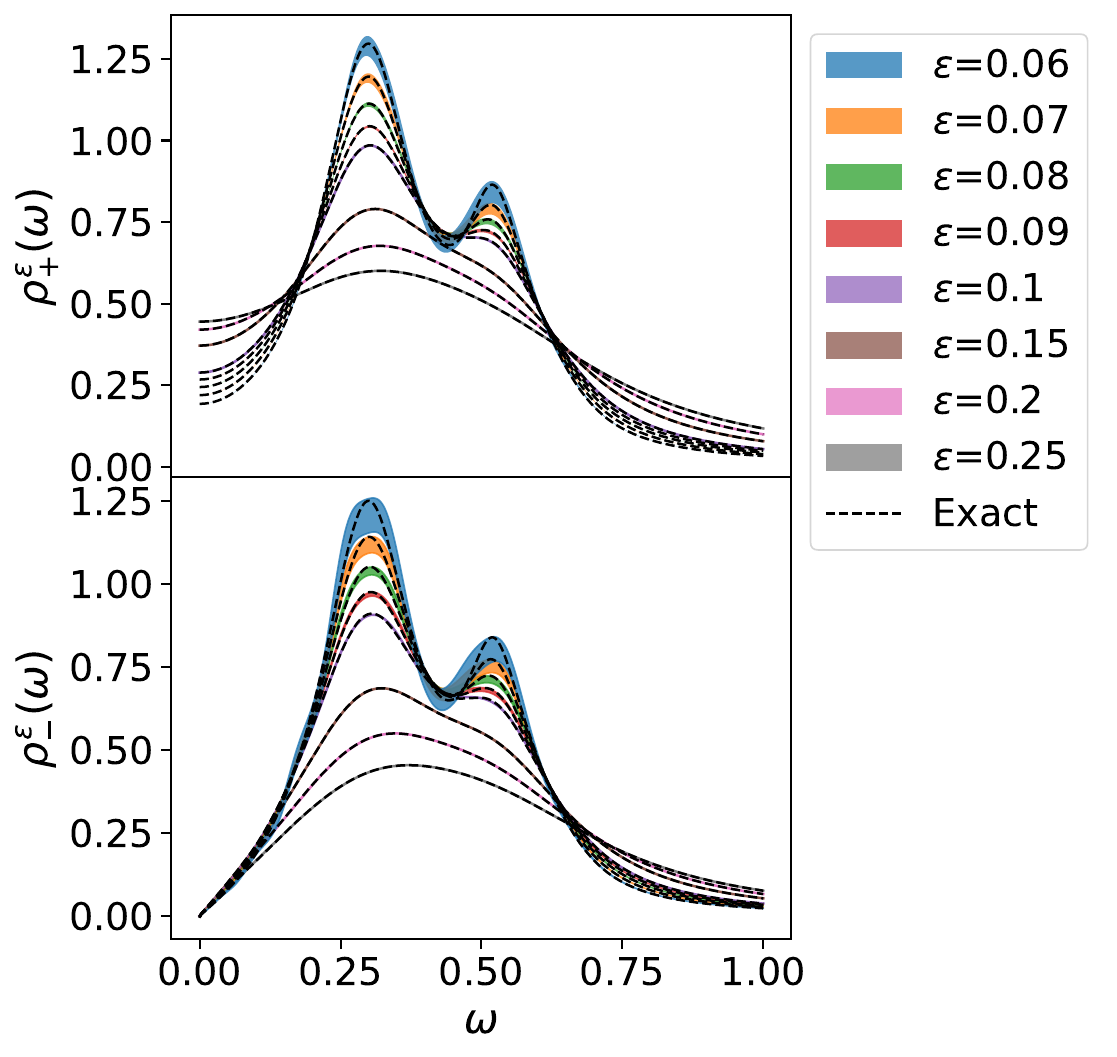}
    \caption{
    Smeared spectral reconstructions of a parameterization of experimental data for the R-ratio.
    Euclidean data were generated with $\beta/a=96$.
    The smearing choices $\epsilon$ are described in the main text.
    In both cases, the exact result lies visibly within the bounding envelope from the Wertevorrat.
    In both cases, the spectral peaks from the $\rho(770)/\omega(782)$ and from the $\phi(1020)$ are clearly visible.
    }
    \label{fig:rratio_combined}
\end{figure}

A parameterization of the experimental data for $R(s)$ is given in Ref.~\cite{Bernecker:2011gh} in terms of Breit-Wigner curves with masses, widths, and amplitudes chosen to match data in the PDG.
The present example uses same parameterization and numerical input values as Ref.~\cite{Bernecker:2011gh}.

Numerical Euclidean data were generated using this parameterization as input for the spectral density with $\beta=96$.
The energy range of the problem was rescaled to lie in the unit interval.
This rescaling places the peak for the $\rho(770)$ at $\omega a \approx 0.25$, which amounts to a lattice spacing of $a\approx 0.07$~fm, a typical cutoff scale appearing in recent calculations of the anomalous magnetic moment of the muon~\cite{Borsanyi:2020mff,chakraborty:2017tqp,
borsanyi:2017zdw,blum:2018mom,giusti:2019xct,shintani:2019wai,FermilabLattice:2019ugu,gerardin:2019rua,Aubin:2019usy,giusti:2019hkz}.

The corresponding reconstructions are shown for a variety of smearing widths in~\cref{fig:rratio_combined}.
The upper (lower) panel shows the fermionic (bosonic) reconstruction.
The two peaks from the $\rho(770)/\omega(782)$ and the $\phi(1020)$ are clearly identified in both reconstructions.
In all cases, the exact result lies within the rigorous bounding envelope of the Wertevorrat.

A reconstruction of $R(s)$ from lattice QCD data using the method of Ref.~\cite{Hansen:2019idp} was recently given in Ref.~\cite{Alexandrou:2022tyn}, where the authors also compared to experimental results including statistical uncertainties.
We hope to conduct a similar test soon using the present method with correlation functions computed using lattice QCD as well as the actual experimental data.

\section{Conclusions\label{sec:conclusions}}

This work has presented a method for numerical analytic continuation, which is closely related to recent work in Refs.~\cite{PhysRevLett.126.056402,PhysRevB.104.165111,Nogaki:2023mut,nogaki2023nevanlinnajl}.
The main application we have in mind is extracting spectral densities from Euclidean data (e.g., computing using lattice QCD).
One of our main insights is that evaluating a Green function in the upper half-plane, \cref{eq:rhoeps_equals_ImGR}, amounts to a smearing prescription in the spirit of \cref{eq:ordered_limit}.
The general formalism is valid for generic (diagonal) thermal Green functions.
As a proof of concept, we presented numerical examples of the method in a few simplified systems.
Important work on robust spectral reconstructions in the presence of statistical noise has recently been given in Ref.~\cite{Huang:2022qsb}.

A key distinguishing feature of our method from recent work in Refs.~\cite{PhysRevLett.126.056402,PhysRevB.104.165111,Nogaki:2023mut,nogaki2023nevanlinnajl}
is the rigorous bounding envelope of the Wertevorrat, which quantifies the systematic uncertainty of the analytic continuation at each point $z$ in the upper half-plane.
The Wertevorrat parameterizes, by explicit construction, the full space of functions consistent with input Euclidean data and with the known causal structure of Green's function in the complex plane.

Another attractive feature of the method is the ease of including additional information, be it experimental or theoretical, to constrain the spectral reconstructions.
For example, when calculating the inclusive structure functions in the resonance region, it might be desirable to constrain the behavior around the elastic scattering peak with lattice QCD data from simpler three-point correlation functions.
Likewise, guidance from perturbation theory may usefully constrain and stabilize reconstructions at high energies.
Regardless of precise source of the constraint, all that's required is to translate the constraints to the upper-half plane and include them numerically in the interpolation problem.
Moreover, the constraining data can be specified wherever it is known most precisely, including just above the real line and \emph{not} necessarily on the imaginary-frequency axis.
As with the general method itself, all constraints are included non-parametrically.
Recent work has also explored the inclusion of constraints, especially moments of the spectral function~\cite{PhysRevLett.126.056402,PhysRevB.104.165111,nogaki2023nevanlinnajl}.

Looking beyond applications in lattice QCD, the problem of computing inverse Laplace transforms arises in many fields.
For instance, in nuclear theory, Green's function Monte Carlo is often used to infer nuclear electroweak response functions from their Laplace transforms, the so-called Euclidean response functions.~\cite{Carlson:2014vla,King:2020wmp}
It would be interesting to explore the application of the ideas explored in this paper to other domains.

\section*{Acknowledgments}
We gratefully acknowledge useful discussions with and comments from
Ryan Abbott,
Tom Blum,
J{\'e}r{\^o}me Charles,
Tom DeGrand,
Will Detmold,
Luchang Jin,
Andreas Kronfeld,
Phiala Shanahan,
Doug Stewart,
and Julian Urban.
We thank Emanuel Gull and Kosuke Nogaki for guidance regarding the beautiful recent work in the condensed matter community.
This material is based upon work supported in part 
by the U.S. Department of Energy, Office of Science under grant Contract Numbers DE-SC0011090 and DE-SC0021006.

\appendix
\section{Technical Material \label{app:disk_bounds}}

\begin{lemma}
Let $w,z \in \C$ with $\overline{w}z\neq 1$.
If $\abs{z}<1$ and $\abs{w}<1$, then $\abs{\frac{w-z}{1-\overline{w}z}} < 1$.
If $\abs{z}=1$ or $\abs{w}=1$, then $\abs{\frac{w-z}{1-\overline{w}z}} = 1$.
\end{lemma}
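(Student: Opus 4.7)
The plan is to reduce both assertions to the single algebraic identity
\begin{align}
    |1-\overline{w}z|^2 - |w-z|^2 = (1-|w|^2)(1-|z|^2),
\end{align}
from which the two cases fall out immediately by inspecting the sign or vanishing of the right-hand side. This is the standard computation underlying the fact that Blaschke factors are disk automorphisms, and it handles both bullets of the lemma in one stroke.

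First I would expand both moduli squared using $|a|^2 = a\bar{a}$. Writing $|1-\overline{w}z|^2 = (1-\overline{w}z)(1-w\overline{z})$ and $|w-z|^2 = (w-z)(\overline{w}-\overline{z})$ and carrying out the multiplications, the two cross terms $-w\overline{z}$ and $-\overline{w}z$ appear with identical signs in each expression and cancel upon subtraction. What survives is $1 + |w|^2|z|^2 - |w|^2 - |z|^2$, which visibly factors as $(1-|w|^2)(1-|z|^2)$.

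With the identity in hand, both statements reduce to reading off a sign. If $|w|<1$ and $|z|<1$, the right-hand side is strictly positive, so $|w-z|<|1-\overline{w}z|$; the hypothesis $\overline{w}z\neq 1$ guarantees a nonzero denominator, and the quoted ratio therefore has modulus strictly less than one. If $|w|=1$ or $|z|=1$, the right-hand side vanishes, so $|w-z|=|1-\overline{w}z|$ and the ratio has modulus exactly one, again using $\overline{w}z\neq 1$ to ensure the quotient is well defined.

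There is no real obstacle here. The one point worth flagging, rather than a difficulty, is the role of the side condition $\overline{w}z\neq 1$: this excludes precisely the indeterminate case in which numerator and denominator vanish simultaneously (which on the closed bidisk occurs only for $|w|=|z|=1$ with $w=z$). For every other pair on the boundary, the identity still delivers the claimed equality of moduli, so the statement of the lemma is sharp.
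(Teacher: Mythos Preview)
Your proof is correct. Both you and the paper compare $|w-z|^2$ with $|1-\overline{w}z|^2$ and reduce the question to the sign of $(1-|w|^2)(1-|z|^2)$, so the underlying idea is the same. The only methodological difference is cosmetic: the paper first exploits rotation invariance to replace $z$ by a real number $r$, then expands using the polar form of $w$ to arrive at the inequality $|w|^2 + r^2 \leq 1 + |w|^2 r^2$, which is of course equivalent to $(1-|w|^2)(1-r^2)\geq 0$. Your route is slightly cleaner because the identity $|1-\overline{w}z|^2 - |w-z|^2 = (1-|w|^2)(1-|z|^2)$ factors directly without any preliminary reduction, handling both variables symmetrically in one line.
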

\begin{proof}
The proof is by calculation. Take the polar decomposition $z=re^{i\varphi}$.
Then
\begin{align}
\left|\frac{w-z}{1-\overline{w}z} \right|
=\left|\frac{w e^{-i\varphi}-r}{1-\overline{we^{-i\varphi}}r} \right|,
\end{align}
so it suffices to take $z=r\in \R$.
This amounts to determining when
\begin{align*}
\frac{w-r}{1-\overline{w}r} \frac{\overline{w}-r}{1-wr} &\leq 1\\
\iff
(w-r)(\overline{w}-r) &\leq (1 - wr)(1 - \overline{w}r)
\end{align*}
Taking $w=\abs{w}e^{i\theta}$, the previous inequality reduces to
\begin{align*}
\abs{w}^2 + r^2 \leq 1 +\abs{w}^2 r^2.
\end{align*}
The inequality is clearly satisfied when both $r < 1$ and $\abs{w} < 1$.
The equality holds when either $r=1$ or $\abs{w}=1$.
\end{proof}
The preceding lemma has immediate consequences for the interpolation problem in Sec.~\ref{sec:interpolation}.
For fixed $z\in\overline{\disk}$, the map $U_{n,z}: \disk \to \disk$ defined by
\begin{align}
    U_{n,z}(w)
    = \frac{w_n^{(n-1)} + b_{z_n}(z) w}{1 + \overline{w_n^{(n-1)}} b_{z_n}(z) w}
\end{align}
is clearly into.
The lemma also shows that $\abs{b_{z_n}(z)} =1$ when $\abs{z}=1$.
Therefore, when $\abs{z}=1$, $U_{n,z}$ is actually onto, i.e., the image of $U_{n,z}$ is the full open disk.
The composition properties of M{\"obius} transformations show, by induction, that the same is true for \cref{eq:Tnz}.

\bibliography{references.bib}

\end{document}